\newcommand{\bea}{\begin{eqnarray}}
\newcommand{\eea}{\end{eqnarray}}
\def\bi{\begin{itemize}}
\def\ei{\end{itemize}}
\def\bc{\begin{center}}
\def\ec{\end{center}}
\def\C{\hbox{$\mit I$\kern-.7em$\mit C$}}
\def\R{\hbox{$\mit I$\kern-.6em$\mit R$}}
\def\ket#1{|#1\rangle}
\newcommand{\one}{\mbox{$1 \hspace{-1.0mm}  {\bf l}$}}
\def\tr{\mathrm{tr}}
\def\ket#1{\left| #1\right>}
\def\bra#1{\left< #1\right|}
\def\conv{\textrm{conv}}
\newtheorem{theorem}{Theorem}
\newtheorem{lemma}{Lemma}
\newtheorem{proposition}{Proposition}
\newtheorem{observation}{Observation}
\newcommand{\uno}{1\!\!1}
\begin{document}

\title{Asymptotic survival of genuine multipartite entanglement in noisy quantum networks depends on the topology}
\author{Patricia Contreras-Tejada}
\affiliation{Instituto de Ciencias Matem\'aticas, E-28049 Madrid, Spain}
\author{Carlos Palazuelos}
\affiliation{Departamento de An\'alisis Matem\'atico y Matem\'atica Aplicada, Universidad Complutense de Madrid, E-28040 Madrid, Spain}
\affiliation{Instituto de Ciencias Matem\'aticas, E-28049 Madrid, Spain}
\author{Julio I. de Vicente}
\affiliation{Departamento de Matem\'aticas, Universidad Carlos III de
Madrid, E-28911, Legan\'es (Madrid), Spain}

\begin{abstract}
The study of entanglement in multipartite quantum states plays a major role in quantum information theory and genuine multipartite entanglement signals one of its strongest forms for applications. However, its characterization for general (mixed) states is a highly nontrivial problem. We introduce a particularly simple subclass of multipartite states, which we term \textit{pair-entangled network} (PEN) states, as those that can be created by distributing exclusively bipartite entanglement in a connected network. We show that genuine multipartite entanglement in a PEN state depends on both the level of noise and the network topology and, in sharp contrast to the case of pure states, it is not guaranteed by the mere distribution of mixed bipartite entangled states. Our main result is a markedly drastic feature of this phenomenon: the amount of connectivity in the network determines whether genuine multipartite entanglement is robust to noise for any system size or whether it is completely washed out under the slightest form of noise for a sufficiently large number of parties. This latter case implies fundamental limitations for the application of certain networks in realistic scenarios, where the presence of some form of noise is unavoidable. To illustrate the applicability of PEN states to study the complex phenomenology behind multipartite entanglement, we also use them to prove superactivation of genuine multipartite nonlocality for any number of parties.
\end{abstract}

\maketitle

\paragraph{Introduction.}
Entanglement is at the core of the foundations of quantum mechanics and it is a crucial resource for the applications of quantum information theory \cite{reviewe}. The analysis of many-body entanglement has provided relevant tools for condensed matter physics \cite{reviewcm} and has given rise to several concrete multipartite applications such as secret sharing \cite{secretsharing}, conference key agreement \cite{conferencekey} and measurement-based quantum computation \cite{1wayqc}. Studying the complex ways in which multipartite entanglement manifests itself is thus interesting both theoretically and to come up with new applications. Of particular interest is the class of genuine multipartite entangled (GME) states, which are those that cannot be obtained by mixing only partially separable states and, therefore, entanglement spreads among all parties and not just a subset. GME is known to play a nontrivial role in certain quantum algorithms \cite{qalgo} and multipartite quantum key distribution schemes \cite{conferencekeygme}. Moreover, remarkably, it has been shown to be a necessary condition to achieve maximum sensitivity in quantum metrology \cite{gmesensing} and to obtain a multipartite private state and, hence, to establish a secret key \cite{GMEkey}. Thus, the certification of GME states has been studied in detail \cite{gmecertification}, although the characterization of entanglement in general is known to be computationally hard \cite{gurvits}. On the other hand, the preparation, control and distribution of GME states is a major experimental challenge. Arguably, the most feasible way to achieve this (e.g.\ in quantum optics applications) is by distributing exclusively bipartite entanglement among different pairs of parties giving rise to a connected network. In fact, quantum networks are currently being actively investigated as a realistic platform for quantum information processing. This includes establishing long-range entanglement starting from smaller entanglement links or harnessing node-to-node entanglement in order to achieve on-demand quantum communication between different possible subsets of parties (see \cite{reviewn} and references therein).

In this Letter we intend to put forward a theoretical analysis, from the point of view of entanglement theory, of the properties of the underlying states that arise in quantum networks, which we term \textit{pair-entangled network} (PEN) states. It should be noticed that such a state is a universal resource under local operations and classical communication (LOCC) provided that the underlying graph is connected and the bipartite entanglement shared by the nodes is of sufficient quality. This is because by means of local preparation and teleportation the parties can then end up sharing \textit{any} quantum state of a given local dimension. Recent work has considered the limitations arising from the distribution of arbitrary bipartite entanglement in networks when state manipulation is bound to a class of operations that is a strict subset of LOCC and it has been shown that certain GME states cannot be prepared in this way \cite{GNME1,GNME2}. However, here we study the entanglement properties of PEN states within the LOCC paradigm depending on the type of entanglement shared and the topology of the network \cite{networkLOCC}. From this point of view, to ask when a PEN state is GME seems a very relevant question as this makes it possible to benchmark the quality of the quantum network. If the corresponding PEN state is not GME, then it cannot be transformed by LOCC into a GME state and, therefore, as pointed out above, this leads to fundamental limitations in applications.

A simple argument shows that all pure PEN states are GME independently of the amount of entanglement shared and the geometry of the network (if it is connected) and, actually, we have recently shown that they are even genuine multipartite nonlocal (GMNL) \cite{networkGMNL}. However, in realistic implementations noise is unavoidable and mixed PEN states must be considered.
The previous property of pure PEN states directly implies that, for a fixed network, GME should be robust to some noise, but the extent to which this holds is unclear. In fact, to our knowledge it is not known whether sharing arbitrary bipartite entanglement is enough to guarantee that a PEN state is GME. Here, we consider a simple and realistic model in which the nodes share isotropic states, i.e. maximally entangled states mixed with white noise, and show that the answer to the above question is negative. The mere fact that the nodes share bipartite entanglement does not imply that a connected network is GME. Furthermore, this not only depends on the level of noise but also on the topology of the network. However, our main result is a more extreme feature of this phenomenon. Instead of asking for which value of the noise parameter a given network is GME, we consider a more realistic approach in which the noise parameter is fixed and we ask which networks display GME under this constraint. It turns out that, for any nonzero value of the noise, any tree network of sufficiently many parties is no longer GME and, on the contrary, the GME of a completely connected network persists for any number of parties if the noise is below some threshold. Thus, asymptotic survival of GME depends drastically on the geometry of the network. While any unavoidable limitation in the ability to prepare entangled states upper bounds the number of parties that can share GME in some topologies, a larger connectivity guarantees GME for any size provided that a certain level of quality in the prepared entangled states can be achieved.

In addition to this, the overwhelming complexity of multipartite state space has often led to constrain the study of entanglement to subsets of states with relevant physical and/or mathematical properties such as graph states \cite{graph}, locally maximally entangleable states \cite{LME} or tensor network states \cite{tensor}. We believe that the class of PEN states is a promising platform endowed with a clear operational motivation in order to study the rich phenomenology of multipartite entanglement. Based on this, using PEN states we provide examples of GME states which are not GMNL, different from those known before \cite{GMEneqGMNL1,GMEneqGMNL2}. The tensor product structure of quantum theory enables the fact that objects that are not resourceful may display this resource when several copies of them are taken together, a phenomenon known as superactivation. This is the case of nonlocality, where it has been proven that superactivation is possible in the bipartite case \cite{palazuelos_super-activation_2012}. Here, building on this construction, we prove superactivation of GMNL for any number of parties using the aforementioned states.

\paragraph{Preliminaries.}
As mentioned above, we will consider networks where the nodes share isotropic states on $\mathbb{C}^d\otimes\mathbb{C}^d$:
\begin{equation}\label{eqisotropic}
\rho(p)=p\phi^+_d+(1-p)\tilde{\mathbbm{1}},
\end{equation}
where $|\phi^+_d\rangle=(1/\sqrt{d})\sum_{i=0}^{d-1}|ii\rangle$ is the $d$-dimensional maximally entangled state, $\phi_d^+=|\phi_d^+\rangle\langle\phi_d^+|$ and $\tilde{\mathbbm{1}}=\mathbbm{1}/d^2$. Isotropic states not only represent a standard noise model but they also possess nontrivial symmetry properties. This has led to an in-depth study of these states and they appear as an intermediate step in several protocols \cite{isotropic}. In particular, isotropic states are entangled if and only if $p>1/(d+1)$ \cite{isotropic}.

PEN states are defined by selecting an undirected graph $G=(V,E)$ that encodes the structure of the network. The vertices $V=[n]:=\{1,2,\ldots,n\}$ represent the parties and the edges $E\subseteq\{(i,j):i,j\in V, i<j\}$ represent when two nodes share a bipartite state. In order to specify the PEN state, one must specify $G$ as well as which state is associated to every edge in $E$. In our case, we will always consider isotropic states $\rho_{ij}(p)$ shared by parties $i$ and $j$ \footnote{We sometimes also label the parties who share a state by the edge instead of the vertices, i.e.\ if $e=(i,j)$, then $\rho_e(p)=\rho_{ij}(p)$.} and, for simplicity, we will often consider that all edges are given by the same isotropic state. Thus, given the graph $G$ and the noise parameter $p$, the corresponding isotropic PEN state is
\begin{equation}\label{pen}
\sigma_G(p)=\bigotimes_{(i,j)\in E}\rho_{ij}(p).
\end{equation}
Here, the indices in the tensor product indicate to which local Hilbert space each qudit of the isotropic states belongs \cite{suppmat}. Thus, given $G$, party $i$ holds $\deg(i)$ qudits (where $\deg(i)$ is the degree of vertex $i$) and the local dimension of $\sigma_G(p)$ for each party $i$ is $d^{\deg(i)}$. We will focus on some particular graphs: A tree graph is a graph with no cycles, such as the star graph in which a central node is connected to all other vertices and there are no more edges. These are graphs with the lowest connectivity. On the other hand, a completely connected graph is that for which $E=\{(i,j):i,j\in V, i<j\}$. Sometimes it will be convenient to alter the notation for vertices in order to label the different particles held by one party. For instance, for 3 parties $A$, $B$ and $C$ the star and completely connected PEN states can be also respectively denoted by
\begin{align}\label{3pen}
\sigma_\textnormal{star}(p)&=\rho_{A_1B}(p)\otimes\rho_{A_2C}(p),\nonumber\\
 \sigma_\textnormal{cc}(p)&=\rho_{A_1B_1}(p)\otimes\rho_{A_2C_1}(p)\otimes\rho_{B_2C_2}(p).
\end{align}

Last, we provide the definition of GME. Given the $n$-partite Hilbert space $H=\bigotimes_{i=1}^nH_i$, a pure state $|\psi\rangle\in H$ is biseparable (otherwise GME) if $|\psi\rangle=|\psi_M\rangle\otimes|\psi_{\overline M}\rangle$ for some $M\subsetneq[n]$ and its complement $\overline M$, where $|\psi_M\rangle\in \bigotimes_{i\in M}H_i$ and $|\psi_{\overline M}\rangle\in \bigotimes_{i\in\overline M}H_{i}$. The definition extends to mixed states by taking the convex hull: the set of biseparable states is $\conv\{|\psi\rangle\langle\psi|:|\psi\rangle\textrm{ is biseparable}\}$ and a state that does not belong to it is GME. It follows from this definition that the set of biseparable states is closed under LOCC. Notice that, for PEN states, it is immediate that if a subset of the network only shares separable states with its complement, then the PEN state is biseparable. It is worth pointing out that studying GME in PEN states built from isotropic states is not only a standard noise model but also quite general, since all states with entangled fraction larger than $1/d$ can be transformed by LOCC into an entangled isotropic state \cite{isotropic}.

\paragraph{Robustness of GME for isotropic PEN states.}
The fact that any PEN state is GME for any connected network sharing arbitrary bipartite pure entangled states follows by noticing that the reduced state corresponding to any subset of parties $M\subsetneq[n]$ will in this case be mixed. Notice that, since the set of biseparable states is closed, any given fixed PEN state will then tolerate some noise in its edges so as to remain GME. However, this still leaves open the question of whether sharing arbitrary bipartite entanglement in any connected network is enough to generate GME. We start by observing that already the simplest case of tripartite PEN states with 2-qubit isotropic edges (cf.\ Eq.\ (\ref{3pen})) shows that this is not the case (disproving moreover a conjecture in \cite{sunchen}). Although we did not compute the exact thresholds, by explicitly constructing biseparable decompositions and using the techniques of \cite{jungnitsch} to build fully decomposable witnesses for these states, in \cite{suppmat} we prove bounds on the noise parameter $p$ that guarantee biseparability or GME for $\sigma_{\textnormal{star}}(p)$ and $\sigma_{\textnormal{cc}}(p)$. The results are summarized in Table \ref{tab:lambdatriangle}. Notice that for $0.491<p\leq0.547$, $\sigma_\textnormal{cc}(p)$ is GME while $\sigma_{\textnormal{star}}(p)$ is biseparable. Thus, this proves the intuitive fact that increasing the connectivity by producing more links makes GME more robust to noise.

\begin{table}[ht]
\centering %
\begin{tabular}{c|cc}
 & biseparable for $p\leq$ & GME for $p>$\tabularnewline
\hline
$\sigma_\textnormal{star}(p)$ & $(1+2\sqrt{2})/7\simeq0.547$ & $1/\sqrt{3}\simeq0.577$\tabularnewline
$\sigma_\textnormal{cc}(p)$ & $3/7\simeq0.429$ & $(2\sqrt{5}-3)/3\simeq0.491$\tabularnewline
\end{tabular}\caption{Bounds for biseparability and GME for tripartite PEN states with 2-qubit isotropic edges. Notice that both states can be biseparable above the threshold $p>1/3$ that determines that the edges are entangled.}
\label{tab:lambdatriangle}
\end{table}

We now move onto our main result. The above observations show that if an experimental implementation is bound to a certain visibility in the preparation of isotropic states, the ability to display GME may depend on the network configuration. Nevertheless, improving the apparatuses to produce isotropic states with $p>0.577$ will suffice in any connected 3-partite configuration. Increasing this visibility will ensure GME for any network of a \textit{fixed} number of parties. However, this does not necessarily imply that GME asymptotically survives, i.e.\ that there is a threshold in the visibility an experimentalist can aim at above which GME is guaranteed independently of the number of parties. This is indeed a more realistic situation, where a certain quantum state can be obtained in experiments and one wants to use it in a large network. Since deleting edges is LOCC (as this amounts to tracing out subsystems), asymptotic survival of GME in one configuration ensures it for those with more links; however, it is not at all clear in principle whether this phenomenon is universal, impossible or whether it depends on the network.

We first focus on a general class of PEN states which covers the networks of lowest connectivity: tree graphs. For this family, we find a negative answer to the question of asymptotic survival of GME.
\begin{theorem}\label{thm1}
Let $G=(V,E)$ be a tree graph with $n$ vertices and let $\sigma_{G}(p)$ denote the corresponding $n$-partite isotropic PEN state as given by Eq.\ (\ref{pen}). Then, $\sigma_{G}(p)$ is biseparable if $|E|\geq dp/(1-p)$.
\end{theorem}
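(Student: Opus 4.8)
The plan is to exhibit an explicit biseparable decomposition of $\sigma_G(p)$ whenever $|E|\geq dp/(1-p)$. The key observation is that a single isotropic state admits a convenient decomposition: write $\rho(p)=p\,\phi^+_d+(1-p)\tilde{\mathbbm 1}$ and note that the maximally mixed component $\tilde{\mathbbm 1}$ is separable across the cut defined by that edge, while $\phi^+_d$ is (maximally) entangled across it. So for each edge $e\in E$ I will split $\rho_e(p)$ into its "entangled part" (weight $p$) and its "separable part" (weight $1-p$). Expanding the tensor product over all edges, $\sigma_G(p)=\bigotimes_{e\in E}\big(p\,\phi^+_e+(1-p)\tilde{\mathbbm 1}_e\big)$ becomes a mixture of $2^{|E|}$ terms indexed by subsets $S\subseteq E$: the term for $S$ has weight $p^{|S|}(1-p)^{|E|-|S|}$ and equals $\big(\bigotimes_{e\in S}\phi^+_e\big)\otimes\big(\bigotimes_{e\notin S}\tilde{\mathbbm 1}_e\big)$.

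The heart of the argument is that for a \emph{tree}, every term in this expansion with $S\neq E$ is biseparable. Indeed, if $S\subsetneq E$ then some edge $e_0=(i,j)$ is missing; removing $e_0$ disconnects the tree into two components $M$ and $\overline M$ (this is exactly the defining property of a tree edge — every edge is a bridge). The term for $S$ then shares only the maximally mixed (hence separable) state $\tilde{\mathbbm 1}_{e_0}$ across the bipartition $M|\overline M$, while all other factors sit strictly inside $M$ or strictly inside $\overline M$; so this term factorizes as a product across $M|\overline M$ and is therefore biseparable. Only the single term $S=E$, namely $\bigotimes_{e\in E}\phi^+_e$ with weight $p^{|E|}$, fails to be manifestly biseparable. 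Hence $\sigma_G(p)$ is a convex combination of a biseparable state (renormalized sum of all $S\neq E$ terms, total weight $1-p^{|E|}$) and the GME "graph-like" state $\bigotimes_{e\in E}\phi^+_e$ with weight $p^{|E|}$.

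To finish, I need the weight on the bad term to be small enough that it can be "absorbed" into the biseparable set. The standard way is to show that $\phi^+_d$ mixed with enough white noise on its own Hilbert space is separable, and then borrow that white noise from the biseparable bulk; quantitatively, $p\,\phi^+_d+(1-p)\tilde{\mathbbm 1}$ is separable for $p\le 1/(d+1)$, and by a counting/rebalancing argument the condition $|E|\ge dp/(1-p)$ is exactly what guarantees that, after redistributing the available $(1-p)$-weighted maximally mixed factors, the entangled fraction carried into any fixed edge drops below the $1/(d+1)$ separability threshold — equivalently, $p^{|E|}$ is dominated by the margin left over in the $1-p^{|E|}$ biseparable component across every cut. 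I expect the main obstacle to be making this last redistribution rigorous: one must check that the \emph{single} biseparable remainder works simultaneously for all the terms, i.e.\ choose one bipartition-respecting decomposition that handles $S=E$ by routing noise from \emph{some} $S\subsetneq E$ term, and verify the bookkeeping $p^{|E|}\le (1-p)\cdot(\text{something})/d$ reduces precisely to $|E|(1-p)\ge dp$. Everything else is the routine expansion and the bridge property of trees.
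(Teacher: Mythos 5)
Your proposal is correct and follows essentially the same route as the paper's proof: expand the tensor product, use the bridge property of tree edges to see that every term containing at least one maximally mixed factor is separable across the corresponding cut, and absorb the single all-$\phi^+$ term by splitting its weight $p^{|E|}$ equally among the $|E|$ terms with exactly one $\tilde{\mathbbm{1}}_e$, so that each edge $e$ carries an isotropic state of visibility $(p/|E|)/(p/|E|+1-p)$, separable precisely when $|E|\geq dp/(1-p)$. The redistribution step you flag as the remaining obstacle is exactly this equal split, and the bookkeeping closes as you predict.
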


The proof (cf.\ \cite{suppmat}) is obtained by obtaining an explicit biseparable decomposition of $\sigma_{G}(p)$ using the separability properties of isotropic states.
To illustrate the previous result, note that the $n$-partite PEN state $\sigma_\textnormal{star}(p)$ with 2-qubit isotropic edges is biseparable when $n\geq(1+p)/(1-p)$. Thus, a visibility $p=0.6$ precludes GME for more than 3 parties, while the already experimentally demanding value of $p=0.95$ bounds the size to 38 parties. This shows a fundamental limitation to GME distribution in practical scenarios such as the star configuration in which a powerful central laboratory prepares entangled states for satellite nodes.

It should be stressed that the proof of Theorem \ref{thm1} can be easily generalized to other noise models and, more importantly, to other networks. In this sense, in \cite[Theorem 4]{suppmat} we prove a similar result for polygonal networks, i.e. those based on a cycle graph. At this point one may wonder whether asymptotic survival of GME is at all possible. Our next result shows that this is indeed the case by considering the network of highest connectivity.
\begin{theorem}\label{thm2}
Let $G$ be a completely connected graph of $n$ vertices and let $\sigma_\textnormal{cc}(p)$ denote the corresponding $n$-partite isotropic PEN state as given by Eq.\ (\ref{pen}). Then, there exists a value of $p_0<1$, which is independent of $n$ (i.e.\ depends only on $d$), such that $\sigma_\textnormal{cc}(p)$ is GME for every $n$ and for all $p>p_0$.
\end{theorem}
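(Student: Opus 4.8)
The plan is to show that a completely connected network of $n$ parties, each pair sharing an isotropic state $\rho(p)$, is GME for all $n$ provided $p$ exceeds a threshold $p_0<1$ depending only on $d$. The natural strategy is to reduce the problem to a fixed, small network by LOCC and invoke the monotonicity of (non-)biseparability under LOCC. Concretely, since tracing out subsystems is LOCC and the set of biseparable states is closed under LOCC, it suffices to produce, from $\sigma_\textnormal{cc}(p)$, \emph{by LOCC}, a state on any bipartition $M|\overline M$ that is entangled across that cut, for all bipartitions simultaneously — more precisely, to exhibit a single global state obtained by LOCC that is GME, or to argue directly that $\sigma_\textnormal{cc}(p)$ cannot lie in the convex hull of partially separable states. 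I would pursue the first route: design an LOCC protocol, independent of $n$, that distills $\sigma_\textnormal{cc}(p)$ into a known GME ``seed'' state (for instance a GHZ-type state or a triangle PEN state known to be GME from Table~\ref{tab:lambdatriangle}), so that GME of the seed pulls back to GME of $\sigma_\textnormal{cc}(p)$.

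The key steps, in order, are as follows. First, I would observe that in the completely connected graph every party shares an isotropic state with every other party, so in particular any \emph{three} chosen parties $A,B,C$ hold among them the triangle state $\sigma_\textnormal{cc}^{(3)}(p)$ plus additional qudits entangled with the remaining $n-3$ parties. The remaining parties can, by LOCC, steer their halves so as to either disentangle cleanly or, better, be merged into one of $A,B,C$: since each outside party $k$ holds one qudit of an isotropic state with each of $A,B,C$, and isotropic states with $p>1/(d+1)$ are LOCC-convertible into maximally entangled fractions, party $k$ can teleport its share of the network ``into'' one of the three distinguished parties without decreasing entanglement across the $A|B|C$ tripartition. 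Iterating, the whole $n$-party network collapses by LOCC onto an effective tripartite state of the form $\sigma_\textnormal{cc}^{(3)}(p')$ for some $p'$ that depends on $p$ and $d$ but \emph{not} on $n$ — this $n$-independence is the crucial point and comes from the fact that in a complete graph the ``merging'' steps reuse already-present direct links rather than chaining them. Second, using the tripartite GME threshold (the $(2\sqrt5-3)/3$ bound of Table~\ref{tab:lambdatriangle}, or a cleaner witness computed directly for general $d$), I would set $p_0$ to be whatever value of $p$ makes $p'$ exceed that tripartite GME threshold; since $p'\to1$ as $p\to1$, such a $p_0<1$ exists. Third, I would invoke LOCC-monotonicity of non-biseparability: if the LOCC image is GME, so was $\sigma_\textnormal{cc}(p)$.

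There is a subtlety in the second route that may in fact be cleaner to execute, and I would keep it as a fallback: rather than collapsing to three parties, directly build a fully decomposable entanglement witness $W$ (in the sense of \cite{jungnitsch}) for $\sigma_\textnormal{cc}(p)$ whose expectation value can be bounded uniformly in $n$. The completely connected state is a tensor product of $\binom n2$ isotropic states, highly symmetric under party permutations; one can try an ansatz witness built as a sum over bipartitions of projectors onto locally-maximally-entangled directions, exploit the isotropic symmetry to compute $\tr(W\sigma_\textnormal{cc}(p))$ in closed form, and check that for $p$ close to $1$ the ``GHZ-coherence'' term (which grows because \emph{every} pair contributes) dominates the biseparable bound for every $n$. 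The fact that adding parties adds edges — hence more entangled links feeding the same global coherence — is exactly what should make the complete graph behave oppositely to the tree.

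The main obstacle I anticipate is controlling the $n$-dependence: naive LOCC distillation protocols tend to degrade the effective noise parameter geometrically with the number of merging steps, which would kill the argument. The whole point of the complete-graph topology must be leveraged to avoid this — either by showing the merges can be done ``in parallel'' using the abundance of direct links so that $p'$ degrades only by a bounded amount, or, in the witness approach, by verifying that the biseparable bound on $\tr(W\sigma)$ does not grow as fast as the genuine-multipartite signal as $n\to\infty$. Establishing that one of these $n$-uniform estimates actually holds — rather than merely that GME survives for each fixed $n$ — is the crux of the theorem.
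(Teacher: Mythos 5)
There is a genuine gap in your main route, and it is not a technicality: the reduction to a tripartite seed state by merging or discarding the remaining $n-3$ parties does not preserve the notion of biseparability that the theorem is about. If $\chi=\sum_M p_M\chi_M$ is an $n$-partite biseparable decomposition and your LOCC protocol teleports party $k$'s systems into $A$ (or simply traces them out), then the term $\chi_{\{k\}}$, which is separable across $k|\overline{k}$, descends to the trivial bipartition $\emptyset\,|\,ABC$ of the three surviving parties and therefore imposes \emph{no} separability constraint on the tripartite output. Consequently an $n$-partite biseparable state can perfectly well be mapped by such a protocol to a tripartite GME state, and GME of the seed does not pull back to GME of $\sigma_\textnormal{cc}(p)$. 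The pullback ``output GME $\Rightarrow$ input GME'' only works when the LOCC map respects the full $n$-party structure (e.g.\ if the seed were an $n$-partite GHZ state distilled with an $n$-independent threshold, which you do not construct). Your second observation -- that it suffices to certify entanglement across every cut -- is also insufficient, since a state can be entangled across every bipartition and still be a convex mixture of states separable across different cuts. You correctly identify the crux (an $n$-uniform estimate that no single fixed subnetwork can provide), but neither of your two routes delivers it: the first is invalid and the second is an undeveloped ansatz.

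For contrast, the paper resolves exactly this difficulty by a global argument: for any family of LOCC maps $\Lambda_{ij}$ onto pairs, a biseparable $n$-partite state obeys $\frac{2}{n(n-1)}\sum_{i<j}F(\Lambda_{ij}(\chi),\phi^+_{ij})\le 1-\frac1n$, because although any single bipartition may leave a given pair $(i,j)$ unconstrained, every bipartition separates at least $n-1$ of the $\binom n2$ pairs, so summing over \emph{all} pairs recovers a nontrivial bound. The quantum side then beats this bound: each outside party teleports its link in parallel (so the visibility degrades only from $p$ to $p^2$, independently of $n$), leaving $i$ and $j$ with $n-2$ copies of $\rho_{ij}(p^2)$, and one-way distillation yields fidelity $1-\varepsilon_n$ with $\varepsilon_n$ exponentially small, which exceeds $1-\frac1n$ for large $n$; small $n$ is handled by openness of the GME set. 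If you want to salvage your approach, you must replace the tripartite collapse by some such certificate that aggregates over all bipartitions of the full $n$-party system.
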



Hence, our results uncover a fundamental property of entanglement in quantum networks: asymptotic survival of GME depends on the topology. The proof of Theorem \ref{thm2}, which is given in \cite{suppmat}, relies on two parts. First, we establish an upper bound on the sum over all pairs of parties of the fidelity with the maximally entangled state $\phi_2^+$ that can be achieved after any LOCC protocol starting with a biseparable state. Then, we show that, above a certain threshold in the visibility, $\sigma_\textnormal{cc}(p)$ can overcome this bound by edge teleportation and entanglement distillation when $n$ is large. Once GME is ensured to persist for a large number of parties, it follows that, for a fixed, large enough visibility, GME can be guaranteed for completely connected networks of any size. The precise value of the threshold $p_0$ can be explicitly given (at least in the limit of large size) as this is controlled by the success of the particular entanglement distillation protocol that is implemented. We used the one-way distillation protocol of \cite{1waydistillation}, which in the particular case where the nodes share 2-qubit isotropic states yields $p_0\simeq0.865$. We did not attempt any optimization in this direction.

\paragraph{Constructing PEN states with relevant entanglement properties}

In addition to the relevance of PEN states in the context of networks, we find this family extremely versatile to study general properties of multipartite entanglement. Here we will focus on the relation between quantum entanglement and nonlocality. The latter concept refers to the possibility of obtaining certain correlations when performing separate measurements on multipartite quantum states which cannot be explained classically, and it is crucial in many applications in quantum information theory \cite{reviewnl}. In precise terms, a given $n$-partite probability distribution $P=\left\{ P(\alpha_{1}\alpha_{2}...\alpha_{n}|\chi_{1}\chi_{2}...\chi_{n})\right\} _{\alpha_{1},...,\alpha_{n},\chi_{1},...,\chi_{n}}$
(with input $\chi_{i}$ and output $\alpha_{i}$ for party $i$) is said to be GMNL if it is not of the form
\begin{equation}
\begin{aligned}P(\alpha_{1}\alpha_{2}...\alpha_{n} & |\chi_{1}\chi_{2}...\chi_{n})\\
=\sum_{M\subsetneq[n]}\sum_{\lambda} & q_{M}(\lambda)P_{M}(\{\alpha_{i}\}_{i\in M}|\{\chi_{i}\}_{i\in M},\lambda)\\
 & \times P_{\overline{M}}(\{\alpha_{i}\}_{i\in\overline{M}}|\{\chi_{i}\}_{i\in\overline{M}},\lambda),
\end{aligned}
\label{eq:BLdistrNpartite}
\end{equation}
where $q_{M}(\lambda)\geq0\,\forall\lambda,M$ and $\sum_{\lambda,M}q_{M}(\lambda)=1$. Otherwise, we say that  $P$ is bilocal. The distributions $P_{M},\,P_{\overline{M}}$ will be assumed to be nonsignalling as this captures most physical situations better than unrestricted $P_{M},\,P_{\overline{M}}$ \cite{GWAM12,BBGS13,SRB20, WSSKS20}. An $n$-partite state $\rho$ is GMNL if local measurements $\{E^{(i)}_{\alpha_i|\chi_i}\geq0\}$ ($\sum_{\alpha_i}E^{(i)}_{\alpha_i|\chi_i}=\mathbbm{1}$ $\forall\chi_i,i$) exist which give rise to a GMNL distribution
\begin{equation}
P(\alpha_{1}\alpha_{2}...\alpha_{n} |\chi_{1}\chi_{2}...\chi_{n})=\tr(\rho\bigotimes_{i=1}^nE^{(i)}_{\alpha_i|\chi_i}).
\end{equation}

While GMNL states are GME, as mentioned in the introduction, the converse implication is not true for any number of parties \cite{GMEneqGMNL1,GMEneqGMNL2}. Finding more examples of GME states that are bilocal and the conditions under which this might happen is crucial to fully understand the relation between entanglement and nonlocality in the multipartite setting.  In fact, the first such example found in the bipartite case \cite{Werner89, Barrett02} is a cornerstone in the field.

It is worth mentioning that many copies of isotropic PEN states are always GME (as long  as $p>1/(d+1)$ for the underlying isotropic states). The fact that this holds even for biseparable PEN states is possible because the set of biseparable states is not closed under tensor products. Indeed, taking many copies of an isotropic PEN state can be understood as having another PEN state with the same topology but where each edge represents many copies of an isotropic state and, thus, whose edges are more entangled. By means of the LOCC protocol of \cite{isotropic}, starting from sufficiently many copies of any isotropic PEN state one can distill another PEN state where each edge represents a state arbitrarily close to a maximally entangled state. However, this new state is GME (in fact, GMNL by \cite{networkGMNL}) and, therefore, the original state must be GME as well.

The situation is not so clear when looking at nonlocality since LOCC transformations do not preserve the set of bilocal states. Being able to obtain a GMNL state by taking many copies of a bilocal one would yield GMNL superactivation. While this has been shown in the bipartite scenario \cite{palazuelos_super-activation_2012}, to our knowledge, it has not been studied for more than two parties. Our last result tackles the previous two questions. It provides new families of bilocal GME states and, moreover, it shows that superactivation can also hold in the multipartite setting.
\begin{theorem}\label{thm superactiv}
Let $\tau(p)$ denote the $n$-partite PEN state corresponding to a star graph in which all edges represent the maximally entangled state except one, which is given by the isotropic state $\rho(p)$. Then, if
\begin{equation}\label{nonsteerablem}
\frac{1}{d+1}<p\leq\frac{(3d-1)(d-1)^{d-1}}{(d+1)d^d},
\end{equation}
\begin{itemize}
\item[(i)] $\tau(p)$ is GME $\forall n\geq3$.
\item[(ii)] $\tau(p)$ is not GMNL $\forall n\geq3$.
\item[(iii)] $\tau(p)^{\otimes k}$ is GMNL $\forall n\geq3$ if $k$ is large enough.
\end{itemize}
\end{theorem}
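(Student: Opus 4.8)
The plan is to treat the three items in turn, leveraging the structure of the star graph with one "noisy" edge $\rho(p)$ and all others maximally entangled. Write the parties as $A$ (the center) together with leaves $B^{(1)},\ldots,B^{(n-1)}$, where $A$ shares $\rho(p)$ with $B^{(1)}$ and $\phi_d^+$ with each of the others.

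For item (i), I would show that $\tau(p)$ is GME whenever $p>1/(d+1)$, i.e.\ whenever the noisy edge is entangled. The natural route is a reduction to the tripartite case: group the leaves $B^{(2)},\ldots,B^{(n-1)}$ with $A$ by noting that $A$ can locally teleport all but one of its halves of the maximally entangled pairs "forward", effectively collapsing the star onto an $A$--$B^{(1)}$--$B^{(2)}$ line whose two edges are $\rho(p)$ and $\phi_d^+$; since teleportation is LOCC and biseparability is LOCC-closed, GME of the smaller state implies GME of $\tau(p)$. It then suffices to show that $\rho_{A_1B^{(1)}}(p)\otimes\phi^+_{A_2B^{(2)}}$ is GME for $p>1/(d+1)$. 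Here I would use a witness argument or the observation that every bipartite cut of this state is entangled: the cuts $A|B^{(1)}B^{(2)}$ and $AB^{(2)}|B^{(1)}$ inherit entanglement from $\rho(p)$, while $AB^{(1)}|B^{(2)}$ inherits it from $\phi_d^+$. Entanglement across every bipartition is not by itself sufficient for GME, so the cleanest argument is probably to build an explicit fully decomposable GME witness along the lines of \cite{jungnitsch}, or to invoke the fact that a tensor product of a pure maximally entangled state with any bipartite entangled state placed on a connected two-edge path is GME. The upper endpoint on $p$ in~\eqref{nonsteerablem} is not needed for (i); it is only the lower bound $p>1/(d+1)$ that matters.

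For item (ii), the key is to use the upper bound in~\eqref{nonsteerablem}. The quantity $(3d-1)(d-1)^{d-1}/((d+1)d^d)$ is (up to the isotropic reparametrization) precisely the known threshold below which the isotropic state $\rho(p)$ admits a local hidden state / local hidden variable model for projective — in fact general — measurements; I would locate this in the unsteerability literature for isotropic states (the $(d-1)^{d-1}/d^d$ factor is the Werner--isotropic LHV bound). Given such a model for $\rho(p)$, the strategy is: whenever the bilocal-decomposition partition $M\subsetneq[n]$ separates the center $A$ from $B^{(1)}$ (equivalently, the noisy edge is "cut"), the corresponding correlations can be reproduced bilocally because the maximally entangled edges across that cut are shared with parties on the same side after grouping, and the one edge that does straddle the cut carries only $\rho(p)$, which is LHV-simulable. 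When the partition keeps $A$ and $B^{(1)}$ together, some maximally entangled edge is cut, but a maximally entangled state shared bipartitely can always be locally simulated on that single cut as well — more precisely, if $A,B^{(1)}$ are on the same side then the remaining structure across the cut is again only maximally entangled pairs and those admit a bilocal description since they connect only two groups. Assembling these cases into a single decomposition of the form~\eqref{eq:BLdistrNpartite}, with the nonsignalling constraint on $P_M,P_{\overline M}$, yields bilocality of $\tau(p)$. I expect the care here is in showing that for \emph{every} choice of local measurements the induced distribution lies in the bilocal set, uniformly over $n$ — i.e.\ in globalizing the edge-wise LHV/teleportation-based simulation into one convex decomposition over the $2^{n-1}-1$ bipartitions.

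For item (iii), I would invoke that $k$ copies of $\rho(p)$, for any $p$ with $p>1/(d+1)$, can be distilled by LOCC arbitrarily close to $\phi_d^+$ (the isotropic distillation protocol of \cite{isotropic}); hence $\tau(p)^{\otimes k}$ can be transformed by LOCC into a state arbitrarily close to the fully-maximally-entangled star PEN state, which is a pure PEN state on a connected graph and therefore GMNL by \cite{networkGMNL}. Since GMNL is an open property (robust under small perturbations) and LOCC can only be applied \emph{before} measurements without creating nonlocality from nothing — wait, the correct logic is the contrapositive used elsewhere in the paper: the distilled target is GMNL, LOCC cannot create GMNL from a bilocal state, so $\tau(p)^{\otimes k}$ must already be GMNL. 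Combined with (ii), this is exactly superactivation. The main obstacle I anticipate is item (ii): producing a genuinely bilocal model valid for all $n$ and all measurements requires correctly chaining the single-edge hidden-variable simulations through the teleportation/grouping structure of the star, and verifying the nonsignalling condition on the pieces $P_M,P_{\overline M}$; items (i) and (iii) are comparatively routine given the results already quoted in the excerpt.
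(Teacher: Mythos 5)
Your proposal breaks down in all three items, and the failures are instructive because each one touches a point the paper has to work around explicitly.

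The fatal problem is item (iii). Your argument rests on the principle that ``LOCC cannot create GMNL from a bilocal state,'' but this principle is false: unlike biseparability, the set of bilocal distributions/states is \emph{not} closed under LOCC, because local filtering plus classical communication can reveal hidden nonlocality. The paper says this in so many words (``LOCC transformations do not preserve the set of bilocal states''), and indeed if your argument were valid, superactivation would follow trivially from distillability and there would be nothing to prove. The distillation argument you describe is exactly the one the paper uses to show that many copies of a biseparable PEN state are \emph{GME} (where LOCC-closedness does hold); it cannot be transplanted to GMNL. The actual proof of (iii) is a quantitative Bell-inequality argument: a star-network extension of the Khot--Vishnoi game \cite{buhrman_near-optimal_2012,palazuelos_super-activation_2012,amr_unbounded_2020} whose bilocal value is at most $C/v$ with $v=d^k$, while measurements on $\tau(p)^{\otimes k}$ achieve a value of order $F^{k}/\mathrm{poly}(k)$ with entanglement fraction $F>1/d$, so the ratio $F^k d^k/\mathrm{poly}(k)$ diverges.

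Item (i) also has a genuine gap: the reduction to a tripartite state by teleporting and grouping leaves is not sound. Biseparability of the $n$-partite $\tau(p)$ is a statement about \emph{all} bipartitions of the $n$ parties; a component of a biseparable decomposition that is separable across a cut isolating one of the absorbed leaves, say $\{B^{(3)}\}|\overline{\{B^{(3)}\}}$, imposes no separability whatsoever on the resulting three-party state, so GME of the reduced state does not imply GME of $\tau(p)$. The paper avoids this by applying a \emph{single-party} local filter $X=\sum_{i}\ket{i}\bra{i}^{\otimes n-1}$ on the hub (which preserves biseparability componentwise since it acts inside one local Hilbert space), mapping $\tau(p)$ to $p\,GHZ+(1-p)(\cdots)$ and detecting GME with the witness $\tfrac1d \mathbbm{1}-GHZ$, using that the maximal biseparable overlap with the GHZ state is $1/d$. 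Your instinct that ``entanglement across every cut is not enough'' is right, but the witness you would need is never produced.

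In item (ii) you correctly identify that the upper bound in Eq.~\eqref{nonsteerablem} is the unsteerability threshold of the isotropic state \cite{almeida}, but the execution is off in two ways. First, bilocality requires exhibiting \emph{one} valid decomposition of the observed distribution, not a separate simulation for each bipartition; your second case, in which a maximally entangled edge straddles the cut and is claimed to be ``locally simulated,'' is simply false (a maximally entangled pair violates Bell inequalities across that cut) and fortunately also unnecessary. The paper shows the distribution is exactly local across the single cut $B^{(1)}|AB^{(2)}\cdots B^{(n-1)}$: the measurements of $A$ and the remaining leaves, contracted against the maximally entangled edges, define an effective POVM $\{F_{\tilde a|\tilde x}\}$ acting on $A_1$ alone, and the local-hidden-\emph{state} model of the nonsteerable edge then factorizes everything. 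The LHS (rather than merely LHV) property is essential here: it makes the second factor a quantum, hence nonsignalling, distribution, which is required by the operational definition of bilocality adopted in the paper --- a distinction your LHS/LHV hedging glosses over.
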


To obtain this result (see \cite{suppmat}), we prove that any star network with an entangled isotropic state on one edge and maximally entangled states on the rest is GME. We also establish a connection between having bilocality of PEN states and the edges being nonsteerable---a well-studied property of bipartite quantum states, intermediate between entanglement and nonlocality \cite{steering}. We show that any star network with a nonsteerable state on one edge is automatically bilocal. Combining the previous two results we can obtain a network $\tau(p)$ verifying conditions (i) and (ii) above, where the bounds in Eq.\ (\ref{nonsteerablem}) guarantee that the edge with the isotropic state is entangled but nonsteerable \cite{almeida}. Finally, using the ideas of \cite{amr_unbounded_2020}, we extend the Bell inequality used to prove bipartite superactivation in \cite{palazuelos_super-activation_2012,cavalcanti_all_2013} to a multipartite one in order to show that $\tau(p)^{\otimes k}$ is GMNL for a large enough $k$.

\paragraph{Conclusions.}
In this Letter we have introduced the class of multipartite PEN states as those underlying the current proposals of quantum networks and we have investigated their GME properties. We have shown that sharing bipartite entanglement in a connected network does not guarantee GME, but that both a higher quality of node-to-node entanglement and a larger connectivity play in favour of displaying this property. Our main result is a drastically contrasting behaviour with respect to this feature: While tree isotropic PEN states cannot be GME for any value of the visibility $p<1$ for sufficiently many parties, the GME of the completely connected PEN state is robust for all visibilities above a fixed threshold for any system size. Furthermore, the class of PEN states is an operationally motivated subset of multipartite states with a clear mathematical structure in which the well-developed theory of bipartite entanglement can be exploited to analyze entanglement in the multipartite scenario. Thus, we have provided a construction of GME but non-GMNL PEN states for any number of parties that lead to superactivation of GMNL.

Besides these particular results, we believe that PEN states might find applications in different contexts and that this work can be continued in several directions. We conclude by posing two such possibilities. First, tree graphs and the completely connected graph represent the two most extreme cases in terms of connectivity. What is the minimal amount of connectivity that enables asymptotic survival of GME? Second, the aforementioned property of tree networks implies that their GMNL cannot asymptotically survive either. However, can the asymptotic survival of GME in completely connected PEN states be extended to GMNL? The dependence of these features on the geometry of the network suggests that there might be a fruitful interplay between these problems and the theory of complex networks.

\begin{acknowledgments}
The authors thank Andreas Winter and David Elkouss for enlightening discussions. This research was funded by the Spanish Ministerio de Ciencia e Innovaci\'{o}n through Grant No. PID2020-113523GB-I00 funded by MCIN/AEI/ 10.13039/501100011033 and by ``ERDF A way of making Europe", and by the Comunidad de Madrid through Grant No. QUITEMAD-CMS2018/TCS-4342. We also acknowledge funding from the Spanish MINECO, through the ``Severo Ochoa Programme
for Centres of Excellence in R\&D'' SEV-2015-0554 funded by MCIN/AEI/ 10.13039/501100011033 and from the Spanish
National Research Council, through the ``Ayuda extraordinaria a Centros
de Excelencia Severo Ochoa'' 20205CEX001 (PCT and CP) and Comunidad de Madrid under the Multiannual Agreement with UC3M in the line of Excellence of University Professors (EPUC3M23), and in the context of the V PRICIT (Regional Programme of Research and Technological Innovation)(JIdV).
\end{acknowledgments}

\begin{widetext}

\section{Supplemental material}

In this supplemental material we prove the content of Table I and Theorems 1, 2 and 3 in the main text. We recall that, in the PEN states we consider, all nodes connected by an edge are assumed to share an isotropic state with visibility $p$
\begin{equation*}
\rho(p)=p\phi^+_d+(1-p)\tilde{\uno},
\end{equation*}
where $|\phi^+_d\rangle=(1/\sqrt{d})\sum_{i=0}^{d-1}|ii\rangle$ is the $d$-dimensional maximally entangled state, $\phi_d^+=|\phi_d^+\rangle\langle\phi_d^+|$ and $\tilde{\one}=\uno/d^2$. As stated in the main text, $\rho(p)$ is entangled iff $p>1/(d+1)$ \cite{isotropic}. When the local dimension $d$ is clear from the context, we write the maximally entangled state simply as $\phi^+$ in order to leave room in the subscript for specifying which parties share it.

\subsection{GME in isotropic PEN states of three parties}

The only possible connected graphs for the case of three parties ($A$, $B$ and $C$) are the star and the completely connected graph. Hence, as explained in the main text, in this section we consider the PEN states
\begin{align*}
\sigma_{\textnormal{star}}(p)&=\rho_{A_1B}(p)\otimes\rho_{A_2C}(p),\nonumber\\
\sigma_{\textnormal{cc}}(p)&=\rho_{A_1B_1}(p)\otimes\rho_{A_2C_1}(p)\otimes\rho_{B_2C_2}(p),
\end{align*}
corresponding to the two possible networks mentioned above, in which the edges stand for 2-qubit isotropic states of the same visibility $p$.

Before presenting the results of this section, let us write the state $\sigma_{\textnormal{star}}(p)$ for isotropic states of local dimension 2 in full detail so that its description is completely clear to the reader (we could analogously write $\sigma_{\textnormal{cc}}(p)$). First of all, note that the state space of the tripartite system is $\mathbb C^2_{A_1}\otimes C^2_{A_2}\mathbb \otimes \mathbb C_B^2\otimes \mathbb C_C^2$, where registers $A_1$ and $A_2$ belong to the same party $A$ (so it has dimension 4), while registers $B$ and $C$, each of dimension 2, belong to the other two parties. Then, if we denote $\phi^+=|\phi ^+\rangle\langle\phi^+|$ and $\tilde{\one}=\uno/4$, where $|\phi^+\rangle=(1/\sqrt{2})(|00\rangle+|11\rangle)$, we have
\begin{align*}
\sigma_{\textnormal{star}}(p)&=(p\phi_{A_1B}^++(1-p)\tilde{\uno}_{A_1B})\otimes (p\phi_{A_2C}^++(1-p)\tilde{\uno}_{A_2C})\\&=p^2(\phi_{A_1B}^+\otimes \phi_{A_2C}^+)+p(1-p)(\phi_{A_1B}^+\otimes\tilde{\uno}_{A_2C})\\&+
p(1-p)(\tilde{\uno}_{A_1B}\otimes\phi_{A_2C}^+)+ (1-p)^2(\tilde{\uno}_{A_1B}\otimes \tilde{\uno}_{A_2C}).
\end{align*}

We can write this state in complete detail by considering a basis $\{|0\rangle, |1\rangle\}$ of $\mathbb C^2$. Then, we have
\begin{align*}
\sigma_{\textnormal{star}}(p)&=p^2\frac{1}{4}\sum_{i,j,i',j'}|i_{A_1}i'_{A_2}\rangle\langle j_{A_1}j'_{A_2}|\otimes |i_{B}\rangle \langle j_B|\otimes |i'_{C}\rangle \langle j'_C|\\&+p(1-p)\frac{1}{8}\sum_{i,j,i',j'}|i_{A_1}i'_{A_2}\rangle\langle j_{A_1}i'_{A_2}|\otimes |i_{B}\rangle \langle j_B|\otimes |j'_{C}\rangle \langle j'_C|
\\&+p(1-p)\frac{1}{8}\sum_{i,j,i',j'}|i_{A_1}i'_{A_2}\rangle\langle i_{A_1}j'_{A_2}|\otimes |j_{B}\rangle \langle j_B|\otimes |i'_{C}\rangle \langle j'_C|\\&+(1-p)^2\frac{1}{16}\sum_{i,j,i',j'}|i_{A_1}i'_{A_2}\rangle\langle i_{A_1}i'_{A_2}|\otimes |j_{B}\rangle \langle j_B|\otimes |j'_{C}\rangle \langle j'_C|,
\end{align*}where the sums run in $i,j,i',j'=0,1$. Note that, in particular, the first term in the sum above can be seen as a maximally entangled state in $A_1A_2|BC$ and the last term is nothing but the (normalized) identity in $A_1A_2BC$.

The goal of this section is to prove the biseparability and GME bounds provided in Table I in the main text, which we reproduce here for the reader's convenience as Table II. Each entry of the table is accounted for by one of the following four propositions. The proofs of the biseparability bounds do not require the assumption that $d=2$ and, hence, we consider the more general case of arbitrary local dimension $d$ for the isotropic states underlying the network. In the proofs we use the following additional notation. The (2-dimensional) flip or swap operator is denoted by $\Pi=\sum_{i,j=0}^{1}\ket{ij}\bra{ji}$. The superscript $\Gamma_M$ stands for partial transposition with respect to the parties in $M\subsetneq[n]$ and $X\succcurlyeq0$ means that the Hermitian matrix $X$ is positive semidefinite.

\begin{table}[ht]
\centering %
\begin{tabular}{c|cc}
 & biseparable for $p\leq$ & GME for $p>$\tabularnewline
\hline
$\sigma_{\textnormal{star}}(p)$ & $(1+2\sqrt{2})/7\simeq0.547$ & $1/\sqrt{3}\simeq0.577$\tabularnewline
$\sigma_{\textnormal{cc}}(p)$ & $3/7\simeq0.429$ & $(2\sqrt{5}-3)/3\simeq0.491$\tabularnewline
\end{tabular}\caption{Bounds for biseparability and GME for 3-partite 2-qubit-sharing isotropic PEN states. Notice that both states can be biseparable above the threshold $p>1/3$ that determines that the edges are entangled.}
\end{table}

\begin{proposition} \label{thm:lambda-iso-gme}
The 3-partite PEN state $\sigma_{\textnormal{star}}(p)$ with 2-qubit isotropic states is GME if $p>1/\sqrt{3}$.
\end{proposition}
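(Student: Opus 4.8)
The plan is to certify GME by exhibiting a \emph{fully decomposable entanglement witness}, following the PPT-mixture criterion of Jungnitsch, Moroder and G\"uhne: every biseparable tripartite state is a convex combination of states that are PPT across one of the three bipartitions $A|BC$, $B|AC$, $C|AB$, so it suffices to find a Hermitian $W$ that admits, for each bipartition $M$, a decomposition $W=P_M+Q_M^{\Gamma_M}$ with $P_M\succcurlyeq0$ and $Q_M\succcurlyeq0$, and such that $\tr\big(W\,\sigma_{\textnormal{star}}(p)\big)<0$ whenever $p>1/\sqrt3$. Such a $W$ is nonnegative on all PPT mixtures, hence on all biseparable states, so a negative value forces $\sigma_{\textnormal{star}}(p)$ to be GME. (By contrast, the complementary biseparability entry in the same table is proven directly by writing down an explicit convex decomposition of $\sigma_{\textnormal{star}}(p)$ into partially separable states, using that the isotropic edge is separable below the relevant visibility.)

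The first step is to cut down the search for $W$ using symmetry. Each isotropic edge is invariant under $U\otimes\bar U$, so $\sigma_{\textnormal{star}}(p)$ is invariant under $(U_1\otimes\bar U_1)_{A_1B}\otimes(U_2\otimes\bar U_2)_{A_2C}$ and under the exchange $A_1B\leftrightarrow A_2C$; I would therefore look for $W$ in the corresponding commutant, which is spanned by a handful of operators built from $\one$, the swaps $\Pi_{A_1B}$ and $\Pi_{A_2C}$, and the projectors $\phi^+_{A_1B}$, $\phi^+_{A_2C}$ (and their products). This turns the problem into a small, explicit semidefinite program whose optimum — and its dual optimal witness — can be written in closed form. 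A natural starting ansatz is the fidelity witness $\tfrac12\one-|\Phi\rangle\langle\Phi|$ with $|\Phi\rangle=|\phi^+\rangle_{A_1B}\otimes|\phi^+\rangle_{A_2C}$ (the underlying pure PEN state): it already detects GME, but only for $p>(2\sqrt2-1)/3\simeq0.61$, because across the single-qubit cuts $B|AC$ and $C|AB$ the reduced state of $|\Phi\rangle$ is maximally mixed and the biseparability threshold $\tfrac12$ is loose. The improvement comes from adding PPT correction terms $Q_M^{\Gamma_M}$ supported on the directions that saturate this threshold for the single-qubit cuts; since the white-noise component of $\sigma_{\textnormal{star}}(p)$ has only small overlap with those directions, this should lower the detection threshold precisely to $1/\sqrt3$.

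Once the witness is in hand the rest is routine: check that each $P_M$ and $Q_M$ is positive semidefinite (after the symmetry reduction these are a few $2\times2$ and $4\times4$ blocks) and evaluate $\tr\big(W\,\sigma_{\textnormal{star}}(p)\big)$, which is quadratic in $p$ with relevant root $p=1/\sqrt3$. I expect the genuine obstacle to be this second step: the fidelity witness alone is provably not sharp enough, and projecting party $A$ onto a qubit and applying the standard three-qubit GHZ/X-state criterion is also insufficient (that route only reaches $p>3/5$, since the projection is lossy), so one really has to use the full four-level Hilbert space of $A$ together with a nontrivial PPT correction. Guessing its exact form — and then verifying full decomposability in closed form so the constant comes out to exactly $1/\sqrt3$ rather than a numerical approximation — is the delicate part, and the symmetry reduction is what makes an analytic verification feasible at all.
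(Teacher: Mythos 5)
Your framework is exactly the paper's: a fully decomposable witness in the sense of Jungnitsch--Moroder--G\"uhne, sought in the span of $\one$, $\phi^{+}$ and $\Pi$ on each edge and respecting the $A_1B\leftrightarrow A_2C$ exchange symmetry, with the decompositions $W=P_M+Q_M^{\Gamma_M}$ to be verified via $\Pi^{\Gamma}=2\phi^{+}$. Your preliminary estimates also check out: the bare fidelity witness $\tfrac12\one-\phi^{+}\otimes\phi^{+}$ gives $\tfrac12-((1+3p)/4)^2$, hence detection only for $p>(2\sqrt2-1)/3\simeq0.61$.

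The gap is that in this method the proof \emph{is} the witness, and you never write it down. Everything after ``the improvement comes from adding PPT correction terms'' describes a search rather than an argument, and nothing in the proposal certifies that the optimum of your symmetric SDP actually reaches $1/\sqrt3$ instead of some larger value; asserting that the correction terms ``should lower the detection threshold precisely to $1/\sqrt3$'' is the claim to be proved, not a step toward it. For the record, the paper's choice is
\begin{equation*}
W=\one\otimes\one+2\,\one\otimes\phi^{+}+2\,\phi^{+}\otimes\one-8\,\phi^{+}\otimes\phi^{+},
\end{equation*}
which lies in exactly the symmetric span you identify and yields $\tr\big(W\,\sigma_{\textnormal{star}}(p)\big)=\tfrac32\left(1-3p^{2}\right)$, negative precisely for $p>1/\sqrt3$. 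Full decomposability is then an explicit check: $P_{A}=2\phi^{+}\otimes(\one-\phi^{+})+2(\one-\phi^{+})\otimes\phi^{+}$ with $Q_{A}=\tfrac12[(\one-\Pi)\otimes(\one+\Pi)+(\one+\Pi)\otimes(\one-\Pi)]$, and $P_{B}=0$, $Q_{B}=(\one+\Pi)\otimes(\one-\phi^{+})+3(\one-\Pi)\otimes\phi^{+}$ together with the exchange-symmetric choice for $C$. So your plan, carried through, lands on the paper's proof; as submitted it stops one decisive step short, and you should either exhibit the witness and its three decompositions or report the SDP optimum with a rigorous dual certificate.
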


\begin{proof}
Throughout this proof we omit the subscripts labeling the qubits in all operators: $X\otimes Y:=X_{A_1B}\otimes Y_{A_2C}$, following the same order as in Eq.\ (\ref{3pen}). We will first show that the operator
\begin{equation}
W=\one\otimes\one+2\one\otimes\phi^{+}+2\phi^{+}\otimes\one-8\phi^{+}\otimes\phi^{+}\label{eq:witpi-1}
\end{equation}
is a GME witness. In order to do so, it suffices to prove that $\tr(W\rho)\geq0$
for every $\rho$ that is a PPT mixture \cite{jungnitsch}. In turn, for this it is enough
to see that there exist $P_{M},Q_{M}\succcurlyeq0$ such that $W=P_{M}+Q_{M}^{\Gamma_{M}}$
for $M=A,B,C$ \cite{jungnitsch}. It is straightforward
to verify that this is indeed the case (we only need to use that $\Pi^{\Gamma}=2\phi^{+}$ for the 2-dimensional maximally entangled state)
if:

\begin{align}
P_{A} & =2\phi^{+}\otimes(\one-\phi^{+})+2(\one-\phi^{+})\otimes\phi^{+},\\
Q_{A} & =\frac{1}{2}[(\one-\Pi)\otimes(\one+\Pi)+(\one+\Pi)\otimes(\one-\Pi)],\\
P_{B} & =0,\quad Q_{B}=(\one+\Pi)\otimes(\one-\phi^{+})+3(\one-\Pi)\otimes\phi^{+},\\
P_{C} & =0,\quad Q_{C}=(\one-\phi^{+})\otimes(\one+\Pi)+3\phi^{+}\otimes(\one-\Pi),
\end{align}
where we have used that $\phi^{+},\one-\phi^{+},\one\pm\Pi\succcurlyeq0$
and that the sum and tensor product of positive semidefinite matrices
is positive semidefinite.

Next, we only need to observe that
\begin{equation}
\tr(W\sigma_{\textnormal{star}}(p))=\frac{3}{2}\left(1-3p^2\right),
\end{equation}
which is strictly smaller than zero for the values of $p$ provided in the statement of the Proposition.
\end{proof}

\begin{proposition}\label{thm:triangle-GME}
The 3-partite PEN state $\sigma_{\textnormal{cc}}(p)$ with 2-qubit isotropic states is GME if $p>(2\sqrt{5}-3)/3$.
\end{proposition}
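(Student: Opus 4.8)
\emph{The plan.} I would follow the route of Proposition~\ref{thm:lambda-iso-gme}: build a fully decomposable (PPT-mixture) witness $W$ on the six-qubit space underlying $\sigma_{\textnormal{cc}}(p)$ and show $\tr(W\sigma_{\textnormal{cc}}(p))<0$ exactly for $p>(2\sqrt5-3)/3$. Note first that tracing out the edge $e_3=B_2C_2$ is LOCC and turns $\sigma_{\textnormal{cc}}(p)$ into $\sigma_{\textnormal{star}}(p)$, so Proposition~\ref{thm:lambda-iso-gme} already yields GME for $p>1/\sqrt3\simeq0.577$; the content here is to push the threshold down to $\simeq0.491$ by \emph{genuinely} exploiting the third edge, so $W$ must be a new witness living on all three edges. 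By \cite{jungnitsch} it suffices to find $P_M,Q_M\succcurlyeq0$ with $W=P_M+Q_M^{\Gamma_M}$ for each bipartition $M\in\{A,B,C\}$, since then $\tr(W\rho)\geq0$ on every PPT mixture and a fortiori on every biseparable state. As in the star proof, the building blocks are $\one$, $\phi^+$ and the swap $\Pi$ on the edges $e_1=A_1B_1$, $e_2=A_2C_1$, $e_3=B_2C_2$, and the identity $\Pi^{\Gamma}=2\phi^+$ (equivalently $(\phi^+)^{\Gamma}=\Pi/2$) on a single qubit pair keeps the span of products of $\{\one,\phi^+,\Pi\}$ closed under the relevant partial transposes.

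\emph{Ansatz and symmetry.} Since $K_3$ has automorphism group $S_3$, the state $\sigma_{\textnormal{cc}}(p)$ is invariant (up to the local unitaries that relabel each party's two registers) under permuting the parties, and these unitaries preserve biseparability. Hence I would look for an $S_3$-invariant witness, e.g.
\begin{equation*}
W=a\,\one_{e_1}\one_{e_2}\one_{e_3}+b\!\!\sum_{\text{cyc}}\!\phi^+_{e_1}\one_{e_2}\one_{e_3}+c\!\!\sum_{\text{cyc}}\!\phi^+_{e_1}\phi^+_{e_2}\one_{e_3}+e\,\phi^+_{e_1}\phi^+_{e_2}\phi^+_{e_3},
\end{equation*}
enlarged by swap terms if necessary. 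Invariance of $W$ means the decomposition $W=P_M+Q_M^{\Gamma_M}$ only needs to be checked for one bipartition, say $A|BC$; the other two follow by conjugating with the relabeling unitaries. For $A|BC$ the map $\Gamma_A$ transposes only the $A_1$ half of $e_1$ and the $A_2$ half of $e_2$ while leaving $e_3$ (entirely on the $BC$ side) untouched, so the decomposition becomes a finite linear-algebra problem over products of $\{\one,\phi^+,\Pi\}$, to be solved — as in the star case — by writing $P_A,Q_A$ as sums of tensor products of $\phi^+$, $\one-\phi^+$ and $\one\pm\Pi$, all of which are $\succcurlyeq0$.

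\emph{Fixing the threshold.} I would pin down the coefficients by also imposing the value of the threshold. Expanding $\sigma_{\textnormal{cc}}(p)=\bigotimes_{k=1}^3\!\bigl(p\,\phi^+_{e_k}+(1-p)\one_{e_k}/4\bigr)$ into $2^3$ terms indexed by subsets $S$ of edges, the $S_3$-invariance of $W$ makes $\tr\!\bigl(W\bigotimes_{k\in S}\phi^+_{e_k}\bigotimes_{k\notin S}\one_{e_k}/4\bigr)$ depend only on $|S|$, so
\begin{equation*}
\tr\!\bigl(W\sigma_{\textnormal{cc}}(p)\bigr)=\sum_{j=0}^{3}\binom{3}{j}p^{j}(1-p)^{3-j}w_j
\end{equation*}
for four constants $w_0,\dots,w_3$ computed from $\tr\phi^+=1$, $\tr(\phi^+\Pi)=1$, $\tr\Pi=2$ and $\tr\one=4$ per qubit pair. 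This is a cubic in $p$; I would choose the coefficients so that it has $9p^2+18p-11$ as a factor — the minimal polynomial of $p_0=(2\sqrt5-3)/3$ over $\mathbb{Q}$ — and is negative for $p>p_0$, which together with $W$ being a valid decomposable witness gives the claim. (One might also try to kill the $p^3$ term so as to get a clean quadratic, as happened for the star.)

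\emph{Main obstacle.} The genuinely hard step is the one hidden behind ``it is straightforward to verify'' in Proposition~\ref{thm:lambda-iso-gme}: guessing the right coefficients and producing the explicit PSD operators $P_A,Q_A$. Unlike the star — where the cut sees a reflection symmetry and the choice $P_B=P_C=0$ already worked — each bipartition of the triangle is only cyclically symmetric and couples two edges at once, so it is not clear a priori that the four-parameter tensor-product ansatz admits such a decomposition; one may be forced to add $\Pi$-terms on the uncut edge or to keep both $P_A$ and $Q_A$ nonzero, and a small search over witness shapes may be needed. A secondary wrinkle is that $\tr(W\sigma_{\textnormal{cc}}(p))$ is genuinely cubic (versus the quadratic $\tfrac{3}{2}(1-3p^2)$ of the star), so one must either engineer a cancellation or correctly identify which root is the relevant sign change. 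If no clean decomposable witness emerges, a fallback giving a weaker threshold is to apply an LOCC map collapsing the spare registers of $\sigma_{\textnormal{cc}}(p)$ onto a three-qubit state and invoke a GHZ-fidelity GME criterion, using that biseparability is LOCC-closed so that GME of the image forces GME of $\sigma_{\textnormal{cc}}(p)$.
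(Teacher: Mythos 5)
Your strategy is exactly the paper's: a fully decomposable witness in the span of edge-wise products of $\one$ and $\phi^+$, certified via PSD decompositions built from $\phi^+$, $\one-\phi^{+}$ and $\one\pm\Pi$ using $\Pi^{\Gamma}=2\phi^{+}$. Your preliminary observations are all correct: tracing out $B_2C_2$ does reduce to the star case for $p>1/\sqrt3$; the symmetry reduction to a single bipartition is legitimate because the party permutations are implemented by local unitaries (which also relabel registers) that permute the edges and preserve both $\sigma_{\textnormal{cc}}(p)$ and any edge-symmetric $W$; and the threshold is indeed a root of $9p^{2}+18p-11$. But the proposal stops exactly where the proof begins: you never exhibit the coefficients $(a,b,c,e)$ nor the operators $P_{M},Q_{M}\succcurlyeq0$, and you explicitly concede that it is ``not clear a priori'' that your ansatz admits such a decomposition and that ``a small search over witness shapes may be needed.'' Since the entire content of the proposition is the existence of this witness, that is a genuine gap rather than a routine verification left to the reader.

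For the record, your four-parameter $S_3$-invariant ansatz does work, with $(a,b,c,e)=(0,1,-1,-3)$, i.e.
\begin{equation*}
W=\sum_{\mathrm{cyc}}\phi^{+}\otimes\one\otimes\one-\sum_{\mathrm{cyc}}\phi^{+}\otimes\phi^{+}\otimes\one-3\,\phi^{+}\otimes\phi^{+}\otimes\phi^{+}
\end{equation*}
in the ordering $A_1B_1,\,A_2C_1,\,B_2C_2$. The paper's decomposition for the cut $A|BC$ is
\begin{align*}
P_{A}&=\one\otimes\phi^{+}\otimes(\one-\phi^{+})+\phi^{+}\otimes(\one-\phi^{+})\otimes(\one-\phi^{+}),\\
Q_{A}&=\tfrac{1}{2}\bigl[(\one-\Pi)\otimes(\one+\Pi)+(\one+\Pi)\otimes(\one-\Pi)\bigr]\otimes\phi^{+},
\end{align*}
with the $\Pi$'s on the two cut edges and $\phi^{+}$ on the uncut one — so your suspicion that one cannot keep $P_{M}=0$ for every cut, as happened for the star, is borne out — together with analogous (not symmetry-conjugate, but equally explicit) pairs for $B$ and $C$. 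One then computes $\tr(W\sigma_{\textnormal{cc}}(p))=\tfrac{3}{64}(11+15p-63p^{2}-27p^{3})=-\tfrac{3}{64}(3p+1)(9p^{2}+18p-11)$, which realizes precisely the factorization you predicted and is negative exactly for $p>(2\sqrt5-3)/3$. In short: right plan, right ansatz, but as submitted it is a research program rather than a proof, because the constructive step that constitutes the argument is missing.
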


\begin{proof}
The proof strategy is the same as in Proposition \ref{thm:lambda-iso-gme} and, as therein, we also omit subscripts for operators: $X\otimes Y\otimes Z:=X_{A_1B_1}\otimes Y_{A_2C_1}\otimes Z_{B_2C_2}$, following the same order as in Eq.\ (\ref{3pen}), unless otherwise explicitly stated. We will show the operator
\begin{equation}
W= \one\otimes\one\otimes\phi^{+}+\one\otimes\phi^{+}\otimes\one+\phi^{+}\otimes\one\otimes\one-\one\otimes\phi^{+}\otimes\phi^{+}-\phi^{+}\otimes\phi^{+}\otimes\one-\phi^{+}\otimes\one\otimes\phi^{+}-3\phi^{+}\otimes\phi^{+}\otimes\phi^{+}
\end{equation}
can be decomposed as
\begin{equation}
W=P_{M}+Q_{M}^{\Gamma_{M}}
\end{equation}
for each bipartition $M=A,B,C$, where $P_{M},Q_{M}\succcurlyeq0$
for all $M.$ Indeed, we have
\begin{equation}
\begin{aligned}P_{A} & =\one\otimes\phi^{+}\otimes(\one-\phi^{+})+\phi^{+}\otimes(\one-\phi^{+})\otimes(\one-\phi^{+})\\
Q_{A} & =\frac{1}{2}[(\one-\Pi)\otimes(\one+\Pi)+(\one+\Pi)\otimes(\one-\Pi)]\otimes\phi^{+}\\
P_{B} & =\one\otimes(\one-\phi^{+})\otimes\phi^{+}+\phi^{+}\otimes(\one-\phi^{+})\otimes(\one-\phi^{+})\\
Q_{B} & =\frac{1}{2}[(\one-\Pi)\otimes(\one+\Pi)+(\one+\Pi)\otimes(\one-\Pi)]_{A_{1}B_{1}B_{2}C_{2}}\otimes\phi_{A_{2}C_{1}}^{+}\\
P_{C} & =(\one-\phi^{+})\otimes\phi^{+}\otimes\one+(\one-\phi^{+})\otimes(\one-\phi^{+})\otimes\phi^{+}\\
Q_{C} & =\phi^{+}\otimes\frac{1}{2}[(\one-\Pi)\otimes(\one+\Pi)+(\one+\Pi)\otimes(\one-\Pi)].
\end{aligned}
\end{equation}
Thus, $W$ is a GME witness. Last, one needs to notice that
\begin{equation}
\tr(W\sigma_{\textnormal{cc}}(p))=\frac{3}{64}(11+15p-63p^{2}-27p^{3}),
\end{equation}
which is strictly smaller than zero when $p>(2\sqrt{5}-3)/3$.
\end{proof}

\begin{proposition}\label{thm:lambda-iso-BS}
The 3-partite PEN state $\sigma_{\textnormal{star}}(p)$ with $d$-dimensional isotropic states is biseparable if
\begin{equation}
p\leq\frac{(1+\sqrt{2})d-1}{d^{2}+2d-1}.
\end{equation}
\end{proposition}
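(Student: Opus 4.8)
The plan is to exhibit an explicit biseparable decomposition of $\sigma_{\textnormal{star}}(p)=\rho_{A_1B}(p)\otimes\rho_{A_2C}(p)$. First I would expand it in the operator basis $\{\phi^+_{A_1B}\otimes\phi^+_{A_2C},\ \phi^+_{A_1B}\otimes\tilde{\one}_{A_2C},\ \tilde{\one}_{A_1B}\otimes\phi^+_{A_2C},\ \tilde{\one}_{A_1B}\otimes\tilde{\one}_{A_2C}\}$, in which it has coefficients $(p^2,\ p(1-p),\ p(1-p),\ (1-p)^2)$; the task is then to reproduce this point as a convex combination of states, each separable across one of the cuts $A|BC$, $B|AC$, $C|AB$.

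Next I would collect the building blocks. Pairing a maximally entangled state on one edge with the threshold isotropic state $\rho(\tfrac{1}{d+1})$ on the other gives $E_C:=\phi^+_{A_1B}\otimes\rho_{A_2C}(\tfrac{1}{d+1})$, separable across $C|AB$ (the $C$ particle sits on the separable edge), and symmetrically $E_B:=\rho_{A_1B}(\tfrac{1}{d+1})\otimes\phi^+_{A_2C}$, separable across $B|AC$. The crucial block rests on a change of viewpoint: read across the cut $A=\{A_1,A_2\}$ versus $BC=\{B,C\}$, and using that edge $A_1B$ carries particles $A_1,B$ while edge $A_2C$ carries $A_2,C$, the state $\phi^+_{A_1B}\otimes\phi^+_{A_2C}$ is — under the obvious identification $\mathbb{C}^d_{A_1}\otimes\mathbb{C}^d_{A_2}\cong\mathbb{C}^{d^2}\cong\mathbb{C}^d_{B}\otimes\mathbb{C}^d_{C}$ — just the maximally entangled state $\phi^+_{d^2}$ on $\mathbb{C}^{d^2}\otimes\mathbb{C}^{d^2}$, while $\tilde{\one}_{A_1B}\otimes\tilde{\one}_{A_2C}$ is the maximally mixed state there. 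Hence $E_A:=\tfrac{1}{d^2+1}\,\phi^+_{A_1B}\otimes\phi^+_{A_2C}+\tfrac{d^2}{d^2+1}\,\tilde{\one}_{A_1B}\otimes\tilde{\one}_{A_2C}$ is the isotropic state on $\mathbb{C}^{d^2}\otimes\mathbb{C}^{d^2}$ with visibility $1/(d^2+1)$, i.e.\ exactly at the separability threshold in dimension $d^2$, so it is separable across $A|BC$. Finally $E_0:=\tilde{\one}_{A_1B}\otimes\tilde{\one}_{A_2C}$ is fully separable.

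I would then try the symmetric ansatz $\sigma_{\textnormal{star}}(p)=\mu\,(E_B+E_C)+\nu\,E_A+\kappa\,E_0$ and match the four basis coefficients. The $\phi^+\otimes\tilde{\one}$ coefficient (equal to the $\tilde{\one}\otimes\phi^+$ one by the $B\leftrightarrow C$ symmetry) forces $\mu=\tfrac{(d+1)p(1-p)}{d}$; the $\phi^+\otimes\phi^+$ coefficient then forces $\nu=\tfrac{(d^2+1)\,p\,\bigl((d+2)p-2\bigr)}{d}$; the $\tilde{\one}\otimes\tilde{\one}$ coefficient forces $\kappa=1+2(d-1)p-(d^2+2d-1)p^2$, and $2\mu+\nu+\kappa=1$ holds identically. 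This is a genuine biseparable decomposition exactly when $\mu,\nu,\kappa\ge0$: $\mu\ge0$ always; $\nu\ge0$ iff $p\ge 2/(d+2)$; and solving $\kappa\ge0$ yields precisely $p\le\frac{(1+\sqrt2)d-1}{d^2+2d-1}$ — the stated bound, with the $\sqrt2$ coming from the discriminant $8d^2$ of that quadratic. For the complementary low-noise range $p\le 2/(d+2)$ (which lies below the threshold, since one checks $\frac{(1+\sqrt2)d-1}{d^2+2d-1}\ge 2/(d+2)$ for all $d$, and where $\sigma_{\textnormal{star}}(p)$ is already fully separable once $p\le 1/(d+1)$) the same scheme works after replacing $E_A$ by the fully separable product $\rho_{A_1B}(\tfrac{1}{d+1})\otimes\rho_{A_2C}(\tfrac{1}{d+1})$, whose coefficients are readily checked to be nonnegative there. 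The two decompositions jointly cover all $p\le\frac{(1+\sqrt2)d-1}{d^2+2d-1}$.

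The only genuinely nonobvious step is the observation behind $E_A$: the part of $\sigma_{\textnormal{star}}(p)$ that looks "most multipartite-entangled", $\phi^+_{A_1B}\otimes\phi^+_{A_2C}$, is in fact only a bipartite maximally entangled state for the cut $A|BC$, and can therefore be diluted with white noise up to visibility $1/(d^2+1)$ while remaining separable across that cut. Everything after that is bookkeeping — writing the four blocks in a common basis and solving a small linear system — with the unique tight constraint $\kappa\ge0$ fixing the threshold. The points I would be careful about are stating the basis identifications cleanly so that $\phi^+_{A_1B}\otimes\phi^+_{A_2C}$ is literally $\phi^+_{d^2}$ across $A|BC$, and handling the easy low-$p$ regime so the two decompositions together cover the full interval.
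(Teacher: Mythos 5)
Your proof is correct and takes essentially the same approach as the paper's: both rest on the observation that $\phi^+_{A_1B}\otimes\phi^+_{A_2C}$ is the maximally entangled state of dimension $d^2$ across $A|BC$ (hence tolerates white noise up to visibility $1/(d^2+1)$), combined with edge-wise isotropic separability at $1/(d+1)$ for the $B$ and $C$ cuts, and the same quadratic in $p$ yields the threshold. The only difference is bookkeeping: the paper splits the $\phi^+\otimes\phi^+$ weight via a free parameter $q$, which avoids your extra case analysis for $p\leq 2/(d+2)$, while you pin the blocks at their separability thresholds and solve for the convex weights.
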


\begin{proof}
Notice that
\begin{equation}
\sigma_{\textnormal{star}}(p)=p^{2}  \phi_{A_{1}B}^{+}\otimes\phi_{A_{2}C}^{+}+p(1-p)\phi_{A_{1}B}^{+}\otimes\tilde{\one}_{A_{2}C}+p(1-p)\tilde{\one}_{A_{1}B}\otimes\phi_{A_{2}C}^{+}+(1-p)^{2}\tilde{\one}_{A_{1}B}\otimes\tilde{\one}_{A_{2}C},
\end{equation}
which can be rewritten as
\begin{equation}
\begin{aligned}\sigma_{\textnormal{star}}(p)&=(1-q)p^{2}\phi_{A_{1}B}^{+}\otimes\phi_{A_{2}C}^{+}+(1-p)^{2}\tilde{\one}_{A_{1}B}\otimes\tilde{\one}_{A_{2}C}\\
 & +\phi_{A_{1}B}^{+}\otimes\left(\frac{qp^{2}}{2}\phi_{A_{2}C}^{+}+p(1-p)\tilde{\one}_{A_{2}C}\right)+\left(\frac{qp^{2}}{2}\phi_{A_{1}B}^{+}+p(1-p)\tilde{\one}_{A_{1}B}\right)\otimes\phi_{A_{2}C}^{+}
\end{aligned}
\label{eq:lambdaisoBS}
\end{equation}
for any $q\in[0,1].$ Now, the first line can be seen as an (unnormalized)
isotropic state in parties $A|BC$ (with dimension $d^{2}$), while
the second line contains isotropic states in $A_{2}|C$ and $A_{1}|B$
respectively (with dimension $d$). Showing that each of these states
is separable will entail the result.

Denote these isotropic states by $\sigma_{0},\sigma_{1},\sigma_{2}$
in the order that they appear in equation (\ref{eq:lambdaisoBS}).
Normalizing $\sigma_{0},$ we find
\begin{equation}
\sigma_{0}=\frac{(1-q)p^{2}\phi_{A_{1}B}^{+}\otimes\phi_{A_{2}C}^{+}+(1-p)^{2}\tilde{\one}_{A_{1}B}\otimes\tilde{\one}_{A_{2}C}}{(1-q)p^{2}+(1-p)^{2}},
\end{equation}
meaning it is separable in $A|BC$ if
\begin{equation}
\frac{(1-q)p^{2}}{(1-q)p^{2}+(1-p)^{2}}\leq\frac{1}{d^{2}+1},
\end{equation}
i.e., if
\begin{equation}
q\geq1-\frac{(1-p)^{2}}{p^{2}d^{2}}.
\end{equation}
Normalizing $\sigma_{1},$ we obtain
\begin{equation}
\sigma_{1}=\frac{qp^{2}\phi_{A_{2}C}^{+}+2p(1-p)\tilde{\one}_{A_{2}C}}{qp^{2}+2p(1-p)},
\end{equation}
which is separable if
\begin{equation}
\frac{qp^{2}}{qp^{2}+2p(1-p)}\leq\frac{1}{d+1}.
\end{equation}
Simplifying, this entails that
\begin{equation}
q\leq\frac{2-2p}{pd}.
\end{equation}
Reasoning symmetically, the separability of $\sigma_{2}$ gives the
same bound. Both bounds on $q$ together entail that
\begin{equation}
1-\frac{(1-p)^{2}}{p^{2}d^{2}}\leq\frac{2-2p}{pd}
\end{equation}
and, solving for $p,$ we obtain the desired bound.
\end{proof}

\begin{proposition}\label{thm:triangle-BS}
The 3-partite PEN state $\sigma_{\textnormal{cc}}(p)$ with $d$-dimensional isotropic states is biseparable if  $p\leq3/(3+2d)$.
\end{proposition}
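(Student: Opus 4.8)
The plan is to exhibit an explicit biseparable decomposition of $\sigma_{\textnormal{cc}}(p)$, extending the argument behind Theorem~\ref{thm1}. First I would expand $\sigma_{\textnormal{cc}}(p)=\bigotimes_{e\in E}\rho_e(p)$ by linearity in each of the three edges into the $2^3$ terms $\bigotimes_{e\in S}\phi^+_e\otimes\bigotimes_{e\notin S}\tilde{\one}_e$ weighted by $p^{|S|}(1-p)^{3-|S|}$, where $S\subseteq E$ is the set of ``live'' (maximally entangled) edges. Whenever $|S|\le 1$ the live edges fail to connect all three parties, so the term factorizes across the bipartition isolating the untouched party and is biseparable. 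The obstruction is the three \emph{path} terms ($|S|=2$: two edges sharing a vertex, each of weight $p^2(1-p)$) together with the \emph{triangle} term ($S=E$, weight $p^3$), all of which are GME and must be absorbed into biseparable pieces using the weight sitting in the lower terms.

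The triangle term I would absorb by breaking \emph{two} edges at once, since every vertex of a cycle has degree two. Split $p^3\,\phi^+_{e_A}\phi^+_{e_B}\phi^+_{e_C}$ into three equal parts, one per party $X$, and for the $X$-part replace the two edges incident to $X$ by isotropic states tuned to the separability threshold $1/(d+1)$: this produces a block proportional to $\phi^+_{e_X}\otimes\rho_{e}(\tfrac1{d+1})\otimes\rho_{e'}(\tfrac1{d+1})$ with $e,e'$ the two edges at $X$, which is separable across $X|\overline X$ because $\phi^+_{e_X}$ lives inside $\overline X$ and each threshold isotropic factor is separable across its own edge-cut. Matching the $\phi^+\phi^+\phi^+$ coefficient forces total weight $p^3(d+1)^2/3$ on this block, and expanding it one finds that it also automatically supplies weight $dp^3/3$ to each of the two path terms compatible with the bipartition $X|\overline X$ and weight $d^2p^3/3$ to the single-edge term whose live edge sits in $\overline X$. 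Summing over $X$: the triangle is fully consumed, each path term has acquired $2dp^3/3$, and each single-edge term $d^2p^3/3$.

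I would then clean up the residual path weight $r:=p^2(1-p)-\tfrac{2dp^3}{3}$ on each path term with the one-edge trick as in Theorem~\ref{thm1}: split each residual path term between the two bipartitions that separate it into a connected piece and a singleton, and in each case turn the relevant $\phi^+$ factor into a threshold isotropic state, which draws weight from a single-edge term at rate $d$. Each single-edge term is thereby tapped for an extra $dr$; everything left over---diminished single-edge terms, the untouched globally maximally mixed term, and the blocks just constructed---is a nonnegative combination of bipartition-separable states, so $\sigma_{\textnormal{cc}}(p)$ is biseparable provided all weights are nonnegative. The constraints are $r\ge 0$, i.e.\ $p\le 3/(2d+3)$; the residual single-edge weight $p(1-p)^2-\tfrac{d^2p^3}{3}$ being nonnegative, which amounts to $p\le 3/(\sqrt3 d+3)$ and is weaker; and $dr$ not exceeding that residual weight, which rearranges to $1-(2+d)p+(1+d+\tfrac{d^2}{3})p^2\ge 0$, a quadratic with discriminant $-d^2/3<0$ and hence true for all $p$. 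Thus the binding constraint is exactly $p\le 3/(2d+3)$.

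The hard part is organizing the bookkeeping so that this binding inequality comes out \emph{sharp}. The naive symmetric route---put one third of the triangle and one half of each adjacent path term into each bipartition bucket and independently noise each of them down to the threshold---fails: the available noise in that bucket (one single-edge term, $p(1-p)^2$) is less than what is needed, $\tfrac{d^2p^3}{3}+dp^2(1-p)$, already at $p=3/(2d+3)$ (by a deficit of order $d^2/(2d+3)^2$). The decisive point is that the path-term portions should be \emph{generated as a by-product} of the two-edge triangle-absorbing blocks, at no extra noise cost, leaving only the small residue $r$ to be mopped up by the single-edge reservoirs; locating this two-stage structure---and checking that the residue and the mop-up are always affordable---is the crux. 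The write-up would presumably collect all of this into three pieces $\Xi_A,\Xi_B,\Xi_C$ with $\Xi_X$ separable across $X|\overline X$ and verify positivity of the coefficients under $p\le 3/(2d+3)$.
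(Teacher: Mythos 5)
Your proof is correct and follows essentially the same route as the paper's: both split the triangle term three ways and absorb each share into a piece separable across one single-party cut by converting the two edges incident to that party into separable isotropic states, with the single-edge terms serving as the noise reservoir, so your two-stage bookkeeping (threshold blocks plus residual mop-up) reconstructs exactly the paper's pieces $\sigma_1,\sigma_2,\sigma_3$, whose separability the paper certifies in one shot by rewriting the two incident edges as a product of isotropic states at visibility $2p/(3-p)$ plus an identity remainder. The binding constraint and the resulting bound $p\leq 3/(3+2d)$ coincide.
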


\begin{proof}
We show that the state of the triangle network can be decomposed into
four matrices, three of which are separable along one bipartition
each, and the fourth of which is fully separable. Using the same convention for subscripts as in the proof of Proposition \ref{thm:triangle-GME}, we notice that
\begin{equation}
\sigma_{\textnormal{cc}}(p)=\frac{p(p^{2}-3p+3)}{3}\left(\sigma_{1}+\sigma_{2}+\sigma_{3}\right)+(1-p)^{3}\tilde{\one}\otimes\tilde{\one}\otimes\tilde{\one},
\end{equation}
where
\begin{equation}
\begin{aligned} \sigma_{1}&= \frac{p^{2}\phi^{+}\otimes\phi^{+}\otimes\phi^{+}+\left(3p(1-p)/2\right)(\phi^{+}\otimes\tilde{\one}\otimes\phi^{+}+\tilde{\one}\otimes\phi^{+}\otimes\phi^{+})+3(1-p)^{2}\tilde{\one}\otimes\tilde{\one}\otimes\phi^{+}}{p^{2}-3p+3}\\
\sigma_{2}&=\frac{p^{2}\phi^{+}\otimes\phi^{+}\otimes\phi^{+}+\left(3p(1-p)/2\right)(\phi^{+}\otimes\phi^{+}\otimes\tilde{\one}+\tilde{\one}\otimes\phi^{+}\otimes\phi^{+})+3(1-p)^{2}\tilde{\one}\otimes\phi^{+}\otimes\tilde{\one}}{p^{2}-3p+3}\\
\sigma_{3}&=\frac{p^{2}\phi^{+}\otimes\phi^{+}\otimes\phi^{+}+\left(3p(1-p)/2\right)(\phi^{+}\otimes\tilde{\one}\otimes\phi^{+}+\phi^{+}\otimes\phi^{+}\otimes\tilde{\one})+3(1-p)^{2}\phi^{+}\otimes\tilde{\one}\otimes\tilde{\one}}{p^{2}-3p+3}.
\end{aligned}
\end{equation}
Clearly, the matrix $\tilde{\one}\otimes\tilde{\one}\otimes\tilde{\one}$
is fully separable. We will show that $\sigma_{1}$ is separable in
$A|BC$, and separability of $\sigma_{2}$ and $\sigma_{3}$ in $B|AC$
and $C|AB$ respectively will follow by symmetry. We have that
\begin{equation}
\sigma_{1}=\tau_{1}\otimes\phi_{B_{2}C_{2}}^{+},
\end{equation}
where showing that
\begin{equation}
\tau_{1}=\frac{p^{2}\phi_{A_{1}B_{1}}^{+}\otimes\phi_{A_{2}C_{1}}^{+}+\left(3p(1-p)/2\right)(\phi_{A_{1}B_{1}}^{+}\otimes\tilde{\one}_{A_{2}C_{1}}+\tilde{\one}_{A_{1}B_{1}}\otimes\phi_{A_{2}C_{1}}^{+})+3(1-p)^{2}\tilde{\one}_{A_{1}B_{1}}\otimes\tilde{\one}_{A_{2}C_{1}}}{p^{2}-3p+3}
\end{equation}
is separable in $A_{1}A_{2}|B_{1}C_{1}$ is sufficient to show that
$\sigma_{1}$ is separable in $A|BC$. Indeed, we can write
\begin{equation}
\tau_{1}=\frac{(3-p)^{2}}{4(3-3p+p^{2})}\left(\frac{2p}{3-p}\phi_{A_{1}B_{1}}^{+}+\frac{3(1-p)}{3-p}\tilde{\one}_{A_{1}B_{1}}\right)\otimes\left(\frac{2p}{3-p}\phi_{A_{2}C_{1}}^{+}+\frac{3(1-p)}{3-p}\tilde{\one}_{A_{2}C_{1}}\right)+\frac{3(1-p)^{2}}{4(3-3p+p^{2})}\tilde{\one}_{A_{1}B_{1}}\otimes\tilde{\one}_{A_{2}C_{1}}.
\label{eq:tauisotropic}
\end{equation}
The isotropic state
\begin{equation}
\frac{2p}{3-p}\phi^{+}+\frac{3(1-p)}{3-p}\tilde{\one}
\end{equation}
is separable whenever $2p/(3-p)\leq1/(d+1),$ i.e., $p\leq3/(3+2d),$
therefore $\tau_{1}$ is fully separable in $A_{1}|A_{2}|B_{1}|C_{1}$
which guarantees the required separability of $\sigma_{1}.$ Reasoning
symmetrically, separability of $\sigma_{2}$ in $B|AC$ and of $\sigma_{3}$
in $C|AB$ follows for the same values of $p$, and, hence, $\sigma_{\textnormal{cc}}(p)$
is biseparable for the stated bounds.
\end{proof}

\subsection{GME in tree networks of arbitrary size is not robust}

In this section we prove Theorem 1 in the main text, which we restate here.

\bigskip
\noindent \textbf{Theorem 1. }\emph{
Let $G=(V,E)$ be a tree graph with $n$ vertices and let $\sigma_{G}(p)$ denote the corresponding $n$-partite isotropic PEN state as given by Eq.\ (2) in the main text. Then, $\sigma_{G}(p)$ is biseparable if $|E|\geq dp/(1-p)$.}

\begin{proof}
Expanding the tensor product, we find
\begin{equation}
\sigma_{G}(p)=p^{|E|}\bigotimes_{e\in E}\phi_{d,e}^{+}+p^{|E|-1}(1-p)\sum_{e\in E}\tilde{\uno}_{e}\otimes\bigotimes_{e'\neq e}\phi_{d,e'}^{+}+\dots\,,\label{eq:treeBS}
\end{equation}
where the omitted terms are all separable along at least one bipartition,
since each bipartition is crossed by exactly one edge and at least
one edge in each term contains $\tilde{\one}.$ Showing that
the above expression is biseparable for $|E|$ as in the statement is sufficient to
prove the claim. But we can rewrite the above as
\begin{equation}
\begin{aligned}
\sigma_{G}(p)=\sum_{e\in E}\left(\frac{p^{|E|}}{|E|}\phi_{d,e}^{+}+p^{|E|-1}(1-p)\tilde{\uno}_{e}\right)\otimes\bigotimes_{e'\neq e}\phi_{d,e}^{+}+\dots\,.
\end{aligned}
\end{equation}
Here, each bracket has $\phi_{d}^{+}$ on $|E|-1$ edges, and the state
\begin{equation}
\frac{(p^{|E|}/|E|)\phi_{d}^{+}+p^{(|E|-1)}(1-p)\tilde{\one}}{p^{|E|}/|E|+p^{(|E|-1)}(1-p)}=\frac{(p/|E|)\phi_{d}^{+}+(1-p)\tilde{\one}}{p/|E|+1-p}\label{eq:treeBS-seppart}
\end{equation}
on the rest. But this is an isotropic state with visibility $(p/|E|)/(p/|E|+1-p)$,
which is thus guaranteed to become separable when the visibility is
smaller than or equal to $1/(d+1)$. For any fixed $p<1$, this can be achieved
by choosing $|E|\geq dp/(1-p).$
\end{proof}

Notice that the bound of Theorem \ref{thm1} is not optimal, as lower $|E|$ could be achieved by distributing the term $\bigotimes_{e\in E}\phi_{d,e}^{+}$ among some or all of the omitted terms in equation (\ref{eq:treeBS}) as well, as in the proof of Proposition \ref{thm:lambda-iso-BS}.

It is worth pointing out that the technique used in the proof of Theorem \ref{thm1} can be used to establish an analogous result for a much more general class of noise models, namely, when the states at the edges are convex mixtures of an entangled state and a separable
state that is not on the boundary of the set of separable states. More interestingly, these ideas can be extended to show that the non-robustness of GME holds not only for tree networks, but also for graphs that contain cycles. We close this section by proving this for the $n$-cycle, which is a graph of $n$ vertices containing a single cycle through all vertices, and we denote the corresponding isotropic PEN state by $\sigma_{\textnormal{cycle}}(p)$ as given by Eq.\ (2) in the main text.

\begin{theorem}\label{thm:polygon-BS}
For any fixed $d$ and $p<1$, there exists $n\in\mathbb{N}$
such that the $n$-partite $d$-dimensional isotropic PEN state $\sigma_\textnormal{cycle}(p)$ is biseparable.
\end{theorem}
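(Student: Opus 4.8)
The plan is to adapt the expansion-and-redistribution argument used to prove Theorem~\ref{thm1}, but now breaking the cycle with \emph{pairs} of low-visibility edges (a single noisy edge no longer disconnects a cycle) and, crucially, spreading the redistribution over \emph{all} pairs of edges rather than a linear number of them, so that the bookkeeping closes for large $n$.

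Write $E=\{e_1,\dots,e_n\}$ for the edges of the $n$-cycle and expand $\sigma_{\textnormal{cycle}}(p)=\sum_{S\subseteq E}p^{\,n-|S|}(1-p)^{|S|}\tau_S$ with $\tau_S:=\bigotimes_{e\in S}\tilde{\uno}_{e}\otimes\bigotimes_{e\notin S}\phi^{+}_{d,e}$. First I would note that $\tau_S$ is biseparable whenever $|S|\ge2$: deleting two edges $e_i,e_j\in S$ splits the cycle into two nonempty arcs $M\,|\,\overline M$, every surviving edge lies within one arc, and each $\tilde{\uno}_{e}$ with $e\in\{e_i,e_j\}$ splits as a product between the arcs, so $\tau_S$ is a product state across $M\,|\,\overline M$. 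Thus the only problematic terms are $\tau_\emptyset$ and the single-identity terms $\tau_{\{e_i\}}$ (both GME, as each contains a pure PEN state on a connected spanning subgraph). To neutralize these I would introduce, for every pair $\{i,j\}$, the state \[\Omega_{\{i,j\}}:=\rho_{e_i}(s)\otimes\rho_{e_j}(s)\otimes\!\bigotimes_{k\ne i,j}\!\phi^{+}_{d,e_k},\qquad \rho_{e}(s):=s\,\phi^{+}_{d,e}+(1-s)\tilde{\uno}_{e},\] where $s\le1/(d+1)$ makes $\rho_{e}(s)$ separable, so that the same arc argument shows $\Omega_{\{i,j\}}$ is separable across the bipartition obtained by deleting $e_i,e_j$, hence biseparable. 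Expanding $\Omega_{\{i,j\}}=s^{2}\tau_\emptyset+s(1-s)(\tau_{\{e_i\}}+\tau_{\{e_j\}})+(1-s)^{2}\tau_{\{e_i,e_j\}}$ and summing over the $\binom{n}{2}$ pairs with a common weight $c$ gives $c\sum_{\{i,j\}}\Omega_{\{i,j\}}=c\binom{n}{2}s^{2}\tau_\emptyset+c(n-1)s(1-s)\sum_i\tau_{\{e_i\}}+c(1-s)^{2}\sum_{\{i,j\}}\tau_{\{e_i,e_j\}}$.

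I would then fix $c,s$ so that the first two contributions match exactly the $\tau_\emptyset$ and $\sum_i\tau_{\{e_i\}}$ parts of $\sigma_{\textnormal{cycle}}(p)$: the equations $c\binom{n}{2}s^{2}=p^{n}$ and $c(n-1)s(1-s)=p^{n-1}(1-p)$ force $s=2p/(n(1-p)+2p)$ and determine $c$. With this choice, $R:=\sigma_{\textnormal{cycle}}(p)-c\sum_{\{i,j\}}\Omega_{\{i,j\}}$ has zero coefficient on $\tau_\emptyset$ and on each $\tau_{\{e_i\}}$, coefficient $p^{n-2}(1-p)^{2}-c(1-s)^{2}$ on each $\tau_{\{e_i,e_j\}}$, and the original non-negative coefficients on all $\tau_S$ with $|S|\ge3$. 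It remains to verify, for $n$ large enough, that (i) $s\le1/(d+1)$, immediate since $s\to0$ as $n\to\infty$; (ii) $p^{n-2}(1-p)^{2}-c(1-s)^{2}\ge0$, which after substituting the formulas reduces to the quadratic inequality $n^{2}(1-p)^{2}-2n(1-p)(1+p)-4p^{2}\ge0$; and (iii) $c\binom{n}{2}=p^{n}/s^{2}<1$, true since $p^{n}/s^{2}\sim p^{n-2}n^{2}(1-p)^{2}/4\to0$. Granting (i)--(iii), $R$ is a non-negative combination of the biseparable states $\tau_S$ ($|S|\ge2$), so $\sigma_{\textnormal{cycle}}(p)$ is a convex combination of the biseparable states $\Omega_{\{i,j\}}$ and of a normalization of $R$, hence biseparable.

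The expansions and the verification of (i)--(iii) are routine; the real point — and the reason the statement is only existential in $n$ — is the choice to dilute the all-$\phi^+$ weight over all $\binom{n}{2}$ pairs. With only $O(n)$ mixing states each $\Omega_{\{i,j\}}$ would need weight $\sim p^{n}/(n s^{2})$, and the resulting over-subtraction $c(1-s)^{2}$ on the double-identity terms would exceed their budget $p^{n-2}(1-p)^{2}$ by a factor growing with $n$; using $\Theta(n^{2})$ pairs is precisely what pushes that factor below $1$. Finally, exactly as in the remark after Theorem~\ref{thm1}, the argument only uses that $\rho_{e}(p)$ is a mixture of $\phi^{+}$ with a separable state lying in the interior of the separable set, so the same proof gives biseparability of $\sigma_{\textnormal{cycle}}$ for this more general class of noise models.
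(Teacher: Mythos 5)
Your proof is correct, and while it starts from the same expansion as the paper's, the cancellation scheme is genuinely different. The paper also writes $\sigma_{\textnormal{cycle}}(p)=\sum_S p^{n-|S|}(1-p)^{|S|}\tau_S$ and notes that every term with $|S|\ge2$ is biseparable, but it then treats the two problematic groups with two separate families of partners: $p^n\bigotimes_e\phi^+_{d,e}$ is spread over the $n$ \emph{adjacent}-pair double-identity terms $\tilde{\uno}_i\otimes\tilde{\uno}_{i+1}$, and each single-identity term is spread over $n-4$ suitably chosen triple-identity terms; in each case separability across the cut isolating one vertex follows because a mixture of maximally entangled states with the full-rank separable state $\tilde{\uno}\otimes\tilde{\uno}$ becomes separable once the entangled weight is $O(1/n)$. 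You instead use one unified family of $\binom{n}{2}$ auxiliary biseparable PEN states $\Omega_{\{i,j\}}$ with two separable isotropic edges, tuning $(c,s)$ so that their sum cancels the $\tau_\emptyset$ coefficient and all single-identity coefficients simultaneously, leaving a nonnegative remainder supported on the already-biseparable $\tau_S$ with $|S|\ge2$. Your route buys an explicit bound on $n$ (through the exact isotropic separability threshold $s\le1/(d+1)$) and avoids the ad hoc choice of partner triples; the paper's route only redistributes weight among $O(n)$ terms of the expansion and exhibits the biseparable fragments directly. Two minor remarks: using $(1-s)/s=n(1-p)/(2p)$, your condition (ii) simplifies to $2(n-1)\ge n$, so it holds for all $n\ge2$ (the quadratic you state appears to contain an arithmetic slip, but this is harmless since it too holds for large $n$); and your closing claim that $O(n)$ partners cannot work applies only to your unified cancellation scheme --- the paper's two-family scheme does succeed with $O(n)$ partners.
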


\begin{proof}
Notice that, setting the convention $n+1\equiv1$, for this network $E=\{(i,i+1):i\in V\}$ and $|E|=|V|=n$. Thus, in a slight abuse of notation we will denote a state $\rho_e$ corresponding to $e=(i,i+1)$ by $\rho_i$. Expanding the
tensor product, we find
\begin{equation}
\sigma_{\textnormal{cycle}}(p)=p^{n}\bigotimes_{i=1}^{n}\phi_{d,i}^{+}+p^{n-1}(1-p)\sum_{i=1}^{n}\tilde{\one}_{i}\otimes\bigotimes_{j\neq i}\phi_{d,j}^{+}+p^{n-2}(1-p)^{2}\sum_{\substack{i,j=1\\
i\neq j
}
}^{n}\tilde{\one}_{i}\otimes\tilde{\one}_{j}\otimes\bigotimes_{k\neq i,j}\phi_{d,k}^{+}+\dots\,,
\label{eq:polygonBS}
\end{equation}
where all terms with two or more edges containing $\tilde{\one}$
are separable along at least one bipartition, since each bipartition
of the cycle is crossed by exactly two edges. We will show that
the terms where fewer than two edges contain $\tilde{\one}$
can be paired with separable terms in order to write $\sigma_\textnormal{cycle}(p)$
as a convex mixture of biseparable states. Out of the terms containing
$\tilde{\one}$ on two edges, there are $n$ such terms containing
$\tilde{\one}_{i}\otimes\tilde{\one}_{i+1}$ for some $i$. This means that the term is separable
in $i+1|\overline{i+1}$. Pairing these $n$ terms with
$p^{n}\bigotimes_{i=1}^{n}\phi_{d,i}^{+}$, we can write a fragment
of $\sigma_\textnormal{cycle}(p)$ as
\begin{equation}
\begin{aligned} & p^{n}\bigotimes_{i=1}^{n}\phi_{d,i}^{+}+p^{n-2}(1-p)^{2}\left(\sum_{i=1}^{n}\tilde{\one}_{i}\otimes\tilde{\one}_{i+1}\otimes\bigotimes_{j\neq i,i+1}\phi_{d,j}^{+}\right)\\
 & =p^{n-2}\sum_{i=1}^{n}\left[\left(\frac{p^{2}}{n}\phi_{d,i}^{+}\otimes\phi_{d,i+1}^{+}+(1-p)^{2}\tilde{\one}_{i}\otimes\tilde{\one}_{i+1}\right)\otimes\bigotimes_{j\neq i,i+1}\phi_{d,j}^{+}\right].
\end{aligned}
\label{eq:fragment1}
\end{equation}
If, for each $i$, the state
\begin{equation}
\frac{p^{2}}{n}\phi_{d,i}^{+}\otimes\phi_{d,i+1}^{+}+(1-p)^{2}\tilde{\one}_{i}\otimes\tilde{\one}_{i+1},\label{eq:iso2}
\end{equation}
once normalized, is separable in $i+1|\overline{i+1}$,
then the fragment of $\sigma_\textnormal{cycle}(p)$ in equation (\ref{eq:fragment1})
will be biseparable. Now, the state in Eq.\ (\ref{eq:iso2}) is a convex mixture
of $\phi_{d,i}^{+}\otimes\phi_{d,i+1}^{+}$ and $\tilde{\one}_{i}\otimes\tilde{\one}_{i+1}$.
Therefore, the state (\ref{eq:iso2}) is guaranteed to become separable
when $(1-p)^{2}/(p^{2}/n+(1-p)^{2})$ is close enough to
1. For fixed $p<1$, this can be achieved by choosing a large enough $n$.

Using a similar strategy, the terms containing $\tilde{\one}$
on one edge can be combined with some of those containing $\tilde{\one}$
on three appropriately chosen edges. Thus, assuming that $n>4$, another
fragment of $\sigma_\textnormal{cycle}(p)$ can be written as
\begin{equation}
\begin{aligned} & p^{n-1}(1-p)\tilde{\one}_{i}\otimes\bigotimes_{j\ne i}\phi_{d,j}^{+}+p^{n-3}(1-p)^{3}\sum_{\substack{j\neq i,i\pm1,\\
i-2
}
}\tilde{\one}_{i}\otimes\tilde{\one}_{j}\otimes\tilde{\one}_{j+1}\otimes\bigotimes_{k\neq i,j,j+1}\phi_{d,k}^{+}\\
 & =p^{n-3}(1-p)\sum_{\substack{j\neq i,i\pm1,\\
i-2
}
}\left[\tilde{\one}_{i}\otimes\bigotimes_{k\neq i,j,j+1}\phi_{d,k}^{+}\otimes\left(\frac{p^{2}}{n-4}\phi_{d,j}^{+}\otimes\phi_{d,j+1}^{+}+(1-p)^{2}\tilde{\one}_{j}\otimes\tilde{\one}_{j+1}\right)\right].
\end{aligned}
\label{eq:fragment2}
\end{equation}
Hence, it is sufficient to show that the state
\begin{equation}
\frac{p^{2}}{n-4}\phi_{d,j}^{+}\otimes\phi_{d,j+1}^{+}+(1-p)^{2}\tilde{\one}_{j}\otimes\tilde{\one}_{j+1},
\end{equation}
once normalized, is separable in $j+1|\overline{j+1}$
to deduce that the fragment in equation (\ref{eq:fragment2}) is biseparable.
Again, for fixed $p$, this is guaranteed for large enough $n$. Since
every term that does not appear in fragments (\ref{eq:fragment1})
and (\ref{eq:fragment2}) is already biseparable, the claim follows.
\end{proof}

\section{GME in the completely connected network of arbitrary size is robust}

In this section we prove Theorem 2 in the main text, which we state again here.

\bigskip
\noindent \textbf{Theorem 2. }\emph{
Let $G$ be a completely connected graph of $n$ vertices and let $\sigma_{\textnormal{cc}}(p)$ denote the corresponding $n$-partite isotropic PEN state as given by Eq.\ (2) in the main text. Then, there exists a value of $p_0<1$, which is independent of $n$ (i.e.\ depends only on $d$), such that $\sigma_{\textnormal{cc}}(p)$ is GME for every $n$ and for all $p>p_0$.}

The claim follows from the next two lemmas. Therein, we consider the fidelity between two quantum states $\rho$ and $\sigma$
\begin{equation}\label{fidelity}
F(\rho,\sigma)=\tr^2\sqrt{\sqrt{\rho}\sigma\sqrt{\rho}}
\end{equation}
which, when one of the states is pure, boils down to
\begin{equation}\label{fidelitypure}
F(\rho,|\psi\rangle\langle\psi|)=\tr(\rho|\psi\rangle\langle\psi|).
\end{equation}
The fidelity is often defined to be the square root of the expression given in Eq.\ (\ref{fidelity}). Our choice is motivated by the fact that the fidelity in Eq.\ (\ref{fidelitypure}) is linear in both arguments.

\begin{lemma}\label{lem:fidelity-bs}
Let $\phi_{ij}^{+}$ be the 2-dimensional maximally entangled state shared by parties $i$ and $j$ and let
\begin{equation}
\Lambda_{ij}:\bigotimes_{i=1}^{n}\mathcal{B}(\mathcal{H}_{i})\rightarrow\mathcal{B}(\mathcal{H}_{i})\otimes\mathcal{B}(\mathcal{H}_{j})
\end{equation}
be an arbitrary LOCC protocol that maps $n$-partite states to bipartite states shared between parties $i$ and $j$. Then, for every $n$-partite biseparable state $\chi$ it holds that
\begin{equation}\label{eq:fidelityBSlemma}
\frac{2}{n(n-1)}\sum_{i<j}F(\Lambda_{ij}(\chi),\phi_{ij}^{+})\leq1-\frac{1}{n}.
\end{equation}
\end{lemma}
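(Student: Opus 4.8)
The plan is to reduce the bound to the case of a fixed bipartition and a pure state on it, and then to control how much ``$\phi^+$-type fidelity'' a single bipartite cut can contribute across all $\binom{n}{2}$ pairs. First I would use convexity: since biseparable states are convex combinations of states that are pure product across some bipartition $M\subsetneq[n]$, and since $\chi\mapsto F(\Lambda_{ij}(\chi),\phi^+_{ij})$ is convex in $\chi$ (the fidelity in Eq.~(\ref{fidelitypure}) is linear, $\Lambda_{ij}$ is linear, and fidelity with a fixed pure state is linear, so in fact the left-hand side is \emph{affine} along the convex-hull decomposition), it suffices to prove the inequality when $\chi=\proj{\psi_M}\otimes\proj{\psi_{\overline M}}$ for a single cut $M$. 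Moreover, because $\Lambda_{ij}$ is LOCC, for any pair $(i,j)$ that lies entirely inside $M$ or entirely inside $\overline M$, the output $\Lambda_{ij}(\chi)$ is obtained from a state that is already a valid state of the $(i,j)$ subsystem after tracing the rest, and LOCC between two parties cannot raise their fidelity with $\phi^+_{ij}$ above what they could already achieve; the key point I want is simply the crude bound $F(\Lambda_{ij}(\chi),\phi^+_{ij})\le 1$ for every pair, which needs no work.

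The real content is the pair $(i,j)$ with $i\in M$, $j\in\overline M$: here $\Lambda_{ij}$ is an LOCC map applied to a state that is \emph{product across the $i|j$ split} (since $\chi$ is product across $M|\overline M$ and $i,j$ sit on opposite sides, the marginal on $ij$ together with any shared classical data is separable in $i|j$). LOCC cannot create entanglement, so $\Lambda_{ij}(\chi)$ is a separable state of the two qubits $i,j$, and hence $F(\Lambda_{ij}(\chi),\phi^+_{ij})\le \max_{\rho\ \mathrm{sep}} \tr(\rho\,\phi^+_{ij}) = 1/2$, the well-known maximal singlet fraction of separable two-qubit states. Summing: the pairs split into those inside $M$, those inside $\overline M$, and the $|M|\,|\overline M|$ ``crossing'' pairs. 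Writing $|M|=m$, the crossing pairs each contribute at most $1/2$ and all others at most $1$, so
\begin{equation}
\sum_{i<j} F(\Lambda_{ij}(\chi),\phi^+_{ij}) \le \binom{n}{2} - \frac12\, m(n-m).
\end{equation}
Dividing by $\binom{n}{2}=n(n-1)/2$ gives $1 - m(n-m)/(n(n-1))$, and since $m(n-m)\ge n-1$ for every $1\le m\le n-1$ (the minimum of the parabola on the integers is at the endpoints $m=1$ or $m=n-1$, where it equals $n-1$), this is at most $1 - 1/n$, as claimed.

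The main obstacle, and the only step requiring care rather than bookkeeping, is justifying that for a crossing pair the output state $\Lambda_{ij}(\chi)$ is genuinely $i|j$-separable. One must argue that an LOCC protocol among $n$ parties, viewed through the coarse bipartition ``party $i$'' versus ``everyone else'' and then ``party $j$'' versus ``everyone else'', cannot generate $i$--$j$ entanglement when the initial state has none across the $i|j$ cut; this follows because the initial $\chi$ is product across $M|\overline M\supseteq i|j$ in the stronger sense that it is a product of a state on $M$ and a state on $\overline M$, so it is in particular separable across $i|(\text{rest})$ and across $j|(\text{rest})$, and LOCC preserves separability across any fixed bipartition, in particular across $i|j$ after the remaining systems are discarded by $\Lambda_{ij}$. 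Once that is in place, everything else is the elementary singlet-fraction bound $1/2$ for separable two-qubit states together with the convexity reduction and the parabola estimate $m(n-m)\ge n-1$.
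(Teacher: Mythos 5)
Your proposal is correct and follows essentially the same route as the paper's proof: the crossing pairs contribute at most $1/2$ each by the separable singlet-fraction bound (LOCC preserving separability across the cut), the remaining pairs are bounded trivially by $1$, and the counting estimate $m(n-m)\geq n-1$ closes the argument. The only cosmetic difference is that you reduce to a single pure product via linearity before counting, whereas the paper carries the convex weights $p_M$ through the double sum, which amounts to the same computation.
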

\begin{proof}
Since $\chi$ is biseparable,
we have
\begin{equation}
\chi=\sum_{M}p_{M}\chi_{M},
\end{equation}
where each $\chi_{M}$ is separable across the bipartition $M|\overline{M}$
and $\sum_{M}p_{M}=1$. Let $M_{ij}$ be the set of bipartitions that
split parties $i$ and $j$. Then, we can write the evolution
of $\chi$ under the protocol $\Lambda_{ij}$ as
\begin{equation}
\Lambda_{ij}(\chi)=\sum_{M\not\in M_{ij}}p_{M}\tau_{M}(i,j)+\sum_{M\in M_{ij}}p_{M}\sigma_{M}(i,j),
\end{equation}
where $\tau_{M}(i,j)$ and $\sigma_{M}(i,j)$ are bipartite states
of parties $i$ and $j$. The state $\tau_{M}(i,j)$ is in principle
unrestricted, so it can have up to unit fidelity with $\phi_{ij}^{+}$. However, since LOCC operations
cannot create entanglement, $\sigma_{M}(i,j)$ must be separable,
therefore its fidelity with $\phi_{ij}^{+}$ cannot be larger than
$1/2$. Therefore, for any LOCC protocol $\Lambda_{ij}$, it holds that
\begin{equation}
F(\Lambda_{ij}(\chi),\phi_{ij}^{+})\leq\sum_{M\not\in M_{ij}}p_{M}+\frac{1}{2}\sum_{M\in M_{ij}}p_{M}=1-\frac{1}{2}\sum_{M\in M_{ij}}p_{M}.
\end{equation}
Finally, summing over all parties $i<j$, we obtain
\begin{equation}\label{eq:fidelityBS}
\sum_{i<j}F(\Lambda_{ij}(\chi),\phi_{ij}^{+})\leq\frac{n(n-1)}{2}-\frac{n-1}{2},
\end{equation}
which readily gives Eq.\ (\ref{eq:fidelityBSlemma}). To see this, notice that the first term in Eq.\ (\ref{eq:fidelityBS}) is the number of distinct pairs $i<j$, while the second comes from the following observation: the sums over $i<j$ and $M\in M_{ij}$
run through all bipartitions $M$, more than once. In fact, the number
of times each bipartition is counted is equal to the number of times
each bipartition is crossed by the edge connecting $i$ and $j$.
In turn, this number is equal to the number of edges crossing each
bipartition. A bipartition splitting $k$ parties from the remaining
$(n-k)$ is crossed by $k(n-k)$ edges, which is smallest when $k=1$.
That is, each bipartition $M\in M_{ij}$ appears at least $(n-1)$
times, and so the sum $\sum_{M}p_{M}$, running over \emph{all }$M$,
appears at least $(n-1)$ times. That is,
\begin{equation}
\sum_{i<j}\sum_{M\in M_{ij}}p_{M}\geq n-1.
\end{equation}
\end{proof}

\begin{lemma}\label{lem2:fidelity-bs}
With the same notation as in Lemma \ref{lem:fidelity-bs}, for any fixed $d$, there is a fixed value of $\tilde p<1$ and particular LOCC protocols $\{\Lambda_{ij}\}$ acting on the $n$-partite $d$-dimensional isotropic PEN state $\sigma_{\textnormal{cc}}(p)$ such that
\begin{equation}\label{eq:fidelityGMElemma}
\frac{2}{n(n-1)}\sum_{i<j}F(\Lambda_{ij}(\sigma_{\textnormal{cc}}(p)),\phi_{ij}^{+})>1-\frac{1}{n}.
\end{equation}
for all $p>\tilde p$ if $n$ is sufficiently large.
\end{lemma}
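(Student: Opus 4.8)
The plan is to construct, for each pair $(i,j)$, an LOCC protocol $\Lambda_{ij}$ that exploits the completely connected topology to route many independent bipartite entangled states onto the pair $(i,j)$ and then distill them into a state close to $\phi_{ij}^+$. Apart from the direct edge $(i,j)$, for each of the remaining $n-2$ vertices $k$ there is a length-two path $i-k-j$, and the edge pairs $\{(i,k),(k,j)\}$ belonging to different $k$ are mutually disjoint. In the first stage of $\Lambda_{ij}$ each party $k\neq i,j$ performs entanglement swapping on its two qudits lying on edges $(i,k)$ and $(k,j)$ (a generalized Bell measurement followed by a Weyl correction applied by party $j$ according to the broadcast outcome), while every qudit not on one of these paths is discarded. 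Writing $\rho(p)$ as the Choi state of the channel $\mathcal N_p=p\,\mathrm{id}+(1-p)\mathcal D$, with $\mathcal D$ the completely depolarizing channel, corrected swapping implements channel composition and $\mathcal N_{p_1}\circ\mathcal N_{p_2}=\mathcal N_{p_1p_2}$, so each path through $k$ deterministically delivers the isotropic state $\rho(p^2)$ between $i$ and $j$. By disjointness of the paths and the product form of $\sigma_{\textnormal{cc}}(p)$, after this stage parties $i$ and $j$ share exactly $n-2$ independent copies of $\rho(p^2)$ (one may also keep the higher-quality direct edge, but this is not needed).

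In the second stage $i$ and $j$ run an entanglement distillation protocol on these copies of $\rho(p^2)$ --- for concreteness the one-way protocol of \cite{1waydistillation} --- and at the end retain a single two-dimensional maximally entangled pair $\rho'_{ij}$ (when $d>2$ a local projection onto a two-dimensional subspace converts a $d$-dimensional distilled pair of infidelity $\delta$ into a qubit pair of infidelity $O(\delta)$ with $\phi_2^+$). Discarding subsystems, swapping at the intermediate nodes, and distilling between $i$ and $j$ are each LOCC, so $\Lambda_{ij}$ is LOCC, and $\Lambda_{ij}(\sigma_{\textnormal{cc}}(p))$ is the same state for every pair up to relabeling; hence it suffices to bound $F(\Lambda_{ij}(\sigma_{\textnormal{cc}}(p)),\phi_{ij}^+)\ge 1-\varepsilon_n$ with $\varepsilon_n$ independent of the pair, after which the left-hand side of (\ref{eq:fidelityGMElemma}) is at least $1-\varepsilon_n$.

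The quantitative core is the rate at which the distilled fidelity approaches $1$. Let $\tilde p$ be the infimum of the $p$ for which $\rho(p^2)$ has strictly positive one-way distillation rate under the chosen protocol; since $\rho(p^2)\to\phi_d^+$ as $p\to1$ and this rate tends to $\log d>0$ there, we have $\tilde p<1$, and $\tilde p$ depends only on $d$ (for $d=2$ and the protocol of \cite{1waydistillation} one gets $\tilde p\simeq0.865$, the value $p_0$ quoted in the main text). Fixing $p>\tilde p$ and running the protocol on $m=n-2$ copies of $\rho(p^2)$ at a fixed rate strictly below the achievable one, typicality/large-deviation estimates for hashing-type distillation yield a block of $\Theta(m)$ output pairs whose joint fidelity with $(\phi_2^+)^{\otimes\Theta(m)}$ is at least $1-2^{-cm}$ for some $c=c(d,p)>0$; since tracing out all but one pair cannot lower its fidelity with the corresponding $\phi_2^+$, and averaging over the protocol's classical record costs only a further exponentially small term, we get $\varepsilon_n\le 2^{-\Omega(n)}$.

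Putting everything together, for $p>\tilde p$ and every pair $(i,j)$ one has $F(\Lambda_{ij}(\sigma_{\textnormal{cc}}(p)),\phi_{ij}^+)\ge 1-\varepsilon_n$ with $\varepsilon_n\le 2^{-\Omega(n)}$, so the average over pairs is at least $1-\varepsilon_n$, which exceeds $1-1/n$ once $\varepsilon_n<1/n$, i.e.\ for all $n$ large enough --- this is (\ref{eq:fidelityGMElemma}), and together with Lemma \ref{lem:fidelity-bs} it shows that $\sigma_{\textnormal{cc}}(p)$ fails to be biseparable for large $n$, which is Theorem 2. The step I expect to be the main obstacle is exactly this last estimate: it does not suffice that distillation works asymptotically, since the infidelity of the output pair must vanish \emph{strictly faster than} $1/n$ --- the rate at which the biseparability bound of Lemma \ref{lem:fidelity-bs} relaxes --- which is why a concrete protocol with a controllable error exponent is used rather than a bare achievability result. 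A lesser point to verify is that corrected entanglement swapping makes the routing stage deterministic and that the $n-2$ routed copies are genuinely in tensor product, both of which follow from the product structure of $\sigma_{\textnormal{cc}}(p)$ and the disjointness of the routing paths.
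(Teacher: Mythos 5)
Your proposal is correct and follows essentially the same route as the paper: route $n-2$ copies of $\rho(p^2)$ onto the pair $(i,j)$ via teleportation/swapping through each intermediate node $k$, then apply the one-way distillation protocol of \cite{1waydistillation}, with the threshold $\tilde p$ set by positivity of the coherent information of $\rho(p^2)$. The only difference is that where you invoke generic large-deviation estimates for hashing-type distillation to get $\varepsilon_n\le 2^{-\Omega(n)}$, the paper makes this precise by combining the distillation--error-correction correspondence of \cite{bennett_mixed-state_1996} with the explicit exponential fidelity bound of \cite{hamada_exponential_2002}; you correctly identified this as the step requiring a concrete error exponent rather than bare achievability.
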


\begin{proof}
The protocol $\Lambda_{ij}$ for each fixed pair $i<j$ goes as follows. In the complete graph, each party $k\neq i,j$ shares isotropic states with these parties: $\rho_{ik}(p)$ and $\rho_{jk}(p)$. The protocol starts with each $k$ ($k\neq i,j$) teleporting their half of $\rho_{ik}(p)$ to $j$ by using $\rho_{jk}(p)$ as a channel. This is a noisy
version of the standard teleportation protocol \cite{bennett_teleporting_1993}. The teleported state will be a mixture of four terms, namely the four combinations of teleporting half of a maximally entangled or maximally
mixed state along a maximally entangled or a maximally mixed channel.
The first term, with weight $p^{2}$, will give a maximally entangled
state, the other three turn out to give a maximally mixed state (as
can be simply checked by performing the calculations in the standard
protocol, replacing the teleported state and/or the channel by an
identity in each case). Therefore, the teleportation protocol yields
\begin{equation}
\rho_{ij}(p^{2})=p^{2}\phi_{d}^{+}+(1-p^{2})\tilde{\one}.
\end{equation}
Thus, parties $i$ and $j$ end up sharing $(n-2)$ perfect copies of
this state, one coming from each party $k\neq i,j$. If $p$ is sufficiently large, $i$ and $j$ can now apply a distillation protocol $D_{ij}$ to
obtain something close to a maximally entangled state, whose fidelity
approaches 1 in the limit of large $n$. More specifically,
\begin{equation}
F(\Lambda_{ij}(\sigma_{\textnormal{cc}}(p)),\phi_{ij}^{+})=F(D_{ij}((\rho_{ij}(p^{2}))^{\otimes(n-2)}),\phi_{ij}^{+})\geq1-\varepsilon_{n},\label{eq:fidelityGME}
\end{equation}
where $\varepsilon_{n}\rightarrow0$ as $n\rightarrow\infty$. To
compare to equation (\ref{eq:fidelityGMElemma}),
there only remains to show that $\varepsilon_{n}\rightarrow0$ sufficiently
fast as $n$ grows.

Reference \cite{bennett_mixed-state_1996} showed that a one-way distillation
protocol acting on isotropic states and having rate $R$ is equivalent
to a quantum error-correcting code on a depolarizing channel with
the same rate $R$. In turn, Ref. \cite{hamada_exponential_2002}
proved a lower bound on the fidelity $F$ of a $d$-dimensional quantum
error-correcting code of rate $R$ acting on a certain class of memoryless
channels, which, in particular, include depolarizing channels. After
$n$ uses of the channel, the lower bound \footnote{There is an extra factor of 2 here as compared to the expression in \cite{hamada_exponential_2002} due to the different definitions of fidelity.} is
\begin{equation}
F\geq1-2(n+1)^{2(d^{2}-1)}d^{-nE}.
\end{equation}
Here, $E$ is a function of the rate $R$ and the noise parameter
of the depolarizing channel, which, in turn, corresponds to a function
of the noise parameter $p$ of the isotropic state. It holds that
$E>0$ when the rate is strictly below the maximum achievable rate
of the channel \cite{hamada_exponential_2002}. By the correspondence with distillation, this entails
that $\varepsilon_{n}$ in equation (\ref{eq:fidelityGME}) goes to
zero exponentially fast for one-way distillable isotropic states if
the rate is suboptimal. Since, for our protocol $\Lambda_{ij}$, we
are only interested in obtaining one copy of $\phi^{+}$, we can achieve
this exponential decay. Therefore,
\begin{equation}
\frac{2}{n(n-1)}\sum_{i<j}F(\Lambda_{ij}(\sigma_{\textnormal{cc}}(p)),\phi_{ij}^{+})\geq1-\varepsilon_{n},
\end{equation}
with $\varepsilon_n<1/n$ if $n$ is large enough.

There remains to bound the visibility $p$ of the isotropic states
for which the statement holds if $n$ is large enough. The one-way
distillation protocol \cite{1waydistillation} requires that
the states $\eta$ to be distilled are such that
\begin{equation}
H(\tr_{A}(\eta))-H(\eta)>0,\label{eq:entropy-eta}
\end{equation}
where $H(\cdot)$ is the von Neumann entropy. It can be readily seen that
isotropic states $\eta=\rho(p^{2})$ of any fixed dimension satisfy this
inequality if $p$ is large enough but strictly smaller than one.
\end{proof}

Notice that in the particular case of $d=2$, Eq.\ (\ref{eq:entropy-eta}) reduces to
\begin{equation}
\frac{3(1-p^{2})}{4}\log_{2}(1-p^{2})+\frac{1+3p^{2}}{4}\log_{2}(1+3p^{2})>1,
\end{equation}
which holds when $p\gtrsim0.865$, which is the estimate mentioned in the main text.

Lemmas \ref{lem:fidelity-bs} and \ref{lem2:fidelity-bs} immediately imply the statement of Theorem \ref{thm2} under the extra assumption that $n$ is large enough. However, this readily gives that the claim must hold for all $n$. Indeed, for a fixed $d$, let $n_{0}\in\mathbb{N}$ be such that $\sigma_{\textnormal{cc}}(p)$ is GME for all $n>n_{0}$ if $p>\tilde{p}$. Now, for every fixed $n\leq n_0$, $\sigma_{\textnormal{cc}}(1)$ is GME and, since the set of GME states is open,
the state must remain GME for all $p>p^{*}(n)$, where $p^{*}(n)<1$.
Thus, the statement of the Theorem holds by picking
\begin{equation}
p_{0}=\max\{\tilde{p},\max_{n\leq n_{0}}p^{*}(n)\}.
\end{equation}
In fact, $p_{0}$ can be estimated as a function of $n_{0}$. Consider
$\sigma_{\textnormal{cc}}(p)$ with 2-dimensional isotropic states and the LOCC
protocol $\Lambda_{ij}$ that traces out all particles
except those corresponding to the isotropic state $\rho_{ij}(p)$. Then, the fidelity with the maximally entangled
state is
\begin{equation}
F(\Lambda_{ij}(\sigma_{\textnormal{cc}}(p)),\phi_{ij}^{+})=\frac{1+3p}{4}
\end{equation}
for all $i<j$. Comparing to the biseparability bound in Lemma
\ref{lem:fidelity-bs}, the $n$-partite state $\sigma_{\textnormal{cc}}(p)$ is GME
if $p>1-4/(3n)$ and, hence, $\max_{n\leq n_{0}}p^{*}(n)\leq1-4/(3n_{0})$.

\subsection{Construction of GME but non-GMNL states that display superactivation of GMNL}

In this section we prove Theorem 3 in the main text. For the sake of completeness we state it again.

\bigskip
\noindent \textbf{Theorem 3. }\emph{
Let $\tau(p)$ denote the $n$-partite PEN state corresponding to a star graph in which all edges represent the maximally entangled state except one, which is given by the isotropic state $\rho(p)$. Then, if}
\begin{equation}\label{nonsteerable}
\frac{1}{d+1}<p\leq\frac{(3d-1)(d-1)^{d-1}}{(d+1)d^d},
\end{equation}
\begin{itemize}
\item[(i)] \emph{$\tau(p)$ is GME $\forall n\geq3$.}
\item[(ii)] \emph{$\tau(p)$ is not GMNL $\forall n\geq3$.}
\item[(iii)] \emph{$\tau(p)^{\otimes k}$ is GMNL $\forall n\geq3$ if $k$ is large enough.}
\end{itemize}

The first two items in the theorem are a consequence of the following two lemmas together with the fact that for the range of visibilities given in Eq.\ (\ref{nonsteerable}) a $d$-dimensional isotropic state is entangled but unsteerable \cite{almeida}. We recall that a bipartite state $\rho_{AB}$ has a local hidden-state (LHS) model if, for any POVMs on $A$'s side $\{M_{a|x}\}$ (where $x$ labels the POVM and $a$ the possible outcomes of each of them), it holds that
\begin{equation}\label{lhs}
\tr_A(M_{a|x}\otimes\one\rho_{AB})=\sum_\lambda p(\lambda)P_A(a|x\lambda)\sigma_\lambda,
\end{equation}
where $p(\lambda)$ is a probability distribution, $P_A(a|x\lambda)$ is a conditional probability distribution given $x$ and $\lambda$, and $\sigma_\lambda$ is a single-party state which depends on the hidden variable $\lambda$. If a state $\rho_{AB}$ does not have an LHS model of the form of equation (\ref{lhs}) for some measurements, then $\rho_{AB}$ is said to be steerable from $A$ to $B$. Exchanging the roles of $A$ and $B$ allows one to define steerability from $B$ to $A$. However, the fact that the isotropic state is permutationally invariant renders this distinction irrelevant in our case.

\begin{lemma}
The $n$-partite PEN state over a star network (with the central party labeled by $A=A_1\cdots A_{n-1}$ and the other parties by $\{B_i\}_{i=1}^{n-1}$),
\begin{equation}
\tau(p)=\rho_{A_1B_1}(p)\otimes\bigotimes_{i=2}^{n-1}\phi^+_{A_iB_i},
\end{equation}
is GME if the isotropic state $\rho_{A_1B_1}(p)$ is entangled, i.e.\ if $p>1/(d+1)$.
\end{lemma}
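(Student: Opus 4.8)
The plan is to prove that $\tau(p) = \rho_{A_1B_1}(p)\otimes\bigotimes_{i=2}^{n-1}\phi^+_{A_iB_i}$ is GME by contradiction: assume it is biseparable and derive a constraint on the edge state $\rho_{A_1B_1}(p)$ that forces it to be separable. Recall from the preliminaries that if a subset of the network shares only separable states with its complement then the PEN state is biseparable; here I want the converse-flavored statement for the star. The key observation is that any bipartition $M|\overline M$ of the parties, with the central party $A$ on (say) the $\overline M$ side, separates some set of leaf parties $B_i$ ($i$ in some index set $S\subseteq\{1,\dots,n-1\}$) from $A$, cutting exactly the edges incident to those leaves. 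If the edge $(A_1,B_1)$ is not cut by a given bipartition, that bipartition only cuts edges carrying maximally entangled states $\phi^+$, which are pure and entangled, so the corresponding component is GME across that cut — not separable. Hence in any biseparable decomposition $\tau(p) = \sum_M p_M \chi_M$, every term $\chi_M$ with $p_M>0$ must correspond to a bipartition that cuts the edge $(A_1,B_1)$, i.e.\ $B_1$ and $A_1$ lie on opposite sides.

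First I would make precise the claim that every bipartition appearing in the decomposition separates $A_1$ from $B_1$. Fix a bipartition $M|\overline M$ and suppose it does \emph{not} separate $A_1$ from $B_1$. Then tracing out all leaf parties except those whose cut edge crosses $M|\overline M$ leaves a state that is still separable across $M|\overline M$ (partial trace is local), but this reduced state is a tensor product of $\phi^+$'s on the cut edges together with whatever remains, and projecting appropriately onto one of those maximally entangled pairs exhibits a pure entangled state across the cut — contradiction (a state separable across a cut cannot be locally transformed into an entangled pure state across that same cut, since separability is preserved under LOCC restricted to that bipartition; equivalently its reduced state on the cut pair is separable). Slightly more carefully, if $M|\overline M$ does not cut $(A_1,B_1)$ then it must cut some other edge $(A_j,B_j)$ with $j\ge 2$ (the graph is connected and $M$ is a proper nonempty subset), and the reduced state of $\chi_M$ on parties $A$ and $B_j$ would have to be separable in the $M|\overline M$ split; but the reduced state of $\tau(p)$ on those two parties, after also tracing $A$ down to the $A_j$ register, is exactly $\phi^+_{A_jB_j}$, which is entangled. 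Since $\tau(p)=\sum_M p_M\chi_M$, the reduced state on $(A_j,B_j)$ equals $\sum_M p_M (\chi_M)_{A_jB_j}$; if any such bipartition contributed, the fidelity bookkeeping would not work out — I would phrase this as: the reduced two-party state on $(A_j,B_j)$ is the pure entangled $\phi^+$, which is an extreme point of the set of states and is entangled, hence every $\chi_M$ appearing must itself reduce to $\phi^+_{A_jB_j}$ on that pair, which is impossible for $\chi_M$ separable across a cut that separates $A_j$ from $B_j$. Therefore each surviving $M$ separates $A_j$ from $B_j$ for every $j\ge2$, and since a bipartition of a star with center $A$ is determined by which leaves go with $A$, the only such bipartition is $\{B_1,\dots,B_{n-1}\}\,|\,\{A\}$ — wait, that would separate $A$ from all leaves including $B_1$, so it does cut $(A_1,B_1)$. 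Good: the conclusion is that the \emph{only} bipartition that can appear is $A | B_1\cdots B_{n-1}$, i.e.\ the central party against all leaves.

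Given that, $\tau(p) = \sum_\lambda q_\lambda\, \xi^A_\lambda \otimes \xi^{B_1\cdots B_{n-1}}_\lambda$ is separable across the cut $A|B_1\cdots B_{n-1}$. Now I trace out $B_2,\dots,B_{n-1}$ together with the registers $A_2,\dots,A_{n-1}$ of the central party: partial trace preserves separability across $A_1|B_1$, and the reduced state of $\tau(p)$ on $(A_1,B_1)$ is precisely $\rho_{A_1B_1}(p)$ since the other edges tensor-factorize and trace to identity (up to normalization). Hence $\rho_{A_1B_1}(p)$ would be separable. But by the stated property of isotropic states, $\rho(p)$ is entangled whenever $p>1/(d+1)$, a contradiction. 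Therefore $\tau(p)$ is GME for all $p>1/(d+1)$ and all $n\ge3$.

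The main obstacle is the rigor of the first step — ruling out \emph{every} bipartition except $A|B_1\cdots B_{n-1}$. The clean way is to argue at the level of reduced states: for any bipartition $M|\overline M$ other than the center-vs-leaves one, there is a leaf $B_j$ with $j\ge2$ on the same side as $A$ (if $B_1$ is the one separated) or more generally some edge carrying $\phi^+$ is \emph{not} cut; tracing down to that pair shows a component that must be entangled across the $M|\overline M$ cut restricted to those two parties, which a biseparable-across-$M$ term cannot supply, so such $M$ carries zero weight. I would want to present this carefully, perhaps by invoking that $\phi^+$ is an extreme point and the reduced map is linear, so the only way to reconstruct $\phi^+_{A_jB_j}$ as a convex combination of reduced states $(\chi_M)_{A_jB_j}$ is for each contributing $\chi_M$ to reduce to $\phi^+_{A_jB_j}$ — which is impossible whenever the cut $M|\overline M$ puts $A_j$ and $B_j$ on opposite sides, and equally impossible (gives a \emph{product} reduced state, not $\phi^+$) when it puts them on the same side unless the whole weight concentrates there. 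Handling both cases and concluding that only $M = \{A\}$ (equivalently $\overline M = \{B_1,\ldots,B_{n-1}\}$) survives is the delicate bookkeeping; everything after that is a one-line partial-trace argument plus the isotropic separability criterion.
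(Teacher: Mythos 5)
Your route is genuinely different from the paper's. The paper applies the local filter $X=\sum_{i}\ket{i}\bra{i}^{\otimes n-1}$ on the central party, which collapses $\tau(p)$ to a GHZ state mixed with classically correlated noise, and then certifies GME of the filtered state with the witness $\frac{1}{d}\one-GHZ$, using that the maximal overlap of a biseparable state with $\ket{GHZ}$ is $1/d$; since local filters preserve biseparability, GME of $\tau(p)$ follows. You instead work directly with a hypothetical biseparable decomposition $\tau(p)=\sum_M p_M\chi_M$, use the purity of the $\phi^+$ edges (a pure state is an extreme point, so every component of a convex decomposition must reduce to $\phi^+_{A_jB_j}$ on that pair) to exclude every bipartition that cuts a maximally entangled edge, and then reduce the single surviving term to the isotropic edge to contradict its entanglement. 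This is more elementary --- no filter, no witness --- and it makes transparent exactly where the hypothesis $p>1/(d+1)$ enters; what the paper's approach buys is an explicit filtered GME state and a quantitative witness violation that are reusable elsewhere.

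There is, however, a concrete logical slip in your bipartition bookkeeping that you must fix. The extreme-point argument shows that any $M$ with $p_M>0$ must \emph{not} separate $A$ from $B_j$ for any $j\ge2$: if it did, $(\chi_M)_{A_jB_j}$ would be separable across that cut yet forced to equal the entangled pure state $\phi^+_{A_jB_j}$. You assert the opposite (``each surviving $M$ separates $A_j$ from $B_j$ for every $j\ge2$'') and consequently identify the surviving bipartition as $A\,|\,B_1\cdots B_{n-1}$. That cut separates $A$ from $B_2$ and is therefore \emph{also} excluded by your own argument; the unique surviving bipartition is $\{B_1\}\,|\,\{A,B_2,\dots,B_{n-1}\}$, the only cut of the star that severs the isotropic edge and no $\phi^+$ edge. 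Fortunately your final step is insensitive to this error: both candidate cuts place $A_1$ and $B_1$ on opposite sides, so tracing down to the pair $(A_1,B_1)$ still yields that $\rho_{A_1B_1}(p)$ would be separable, contradicting $p>1/(d+1)$. With the bipartition corrected, your proof is complete and valid for all $n\ge3$.
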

\begin{proof}
We will show that, if $A$ applies
\begin{equation}
X=\sum_{i=0}^{d-1}\ket{i}\bra{i}^{\otimes n-1},
\end{equation}
and the $\{B_i\}$ apply the identity, the resulting state is GME if $p>1/(d+1)$. Since
local operators preserve biseparability, this will mean that $\tau(p)$
is GME too. Let
\begin{equation}
\tilde{\tau}_{f}(p)=\left(X_{A}\otimes\one_{B_{1}...B_{n-1}}\right)\tau\left(X_{A}^{\dagger}\otimes\one_{B_{1}...B_{n-1}}\right)
\end{equation}
be the unnormalized state after the parties apply their operations.
Since
\begin{equation}
\tau(p)=p\phi_{A_1B_{1}}^{+}\otimes\phi_{A_2B_{2}}^{+}\otimes\dots\otimes\phi_{A_{n-1}B_{n-1}}^{+}+(1-p)\tilde{\one}_{A_1B_{1}}\otimes\phi_{A_2B_{2}}^{+}\otimes\dots\otimes\phi_{A_{n-1}B_{n-1}}^{+}.
\end{equation}
and
\begin{equation}
\begin{aligned}\phi_{AB_{1}}^{+}\otimes\phi_{AB_{2}}^{+}\otimes\dots\otimes\phi_{AB_{n-1}}^{+} & =\frac{1}{d^{n-1}}\sum_{i,j=0}^{d^{n-1}-1}\ket{i}_{A}\ket{i}_{B_{1}...B_{n-1}}\bra{j}_{A}\bra{j}_{B_{1}...B_{n-1}}\\
\tilde{\one}_{AB_{1}}\otimes\phi_{AB_{2}}^{+}\otimes\dots\otimes\phi_{AB_{n-1}}^{+} & =\frac{1}{d^{n}}\sum_{i,j=0}^{d-1}\sum_{k,\ell=0}^{d^{n-2}-1}\ket{ik}_{A}\ket{jk}_{B_{1}...B_{n-1}}\bra{i\ell}_{A}\bra{j\ell}_{B_{1}...B_{n-1}}.
\end{aligned}
\end{equation}
Applying $X$ to each $\ket{i}_{A}$, where $i=0,...,d^{n-1}-1$,
picks out the terms where all digits of $i$ are equal, and thus gives
simply $\ket{i}_{A}$ with $i=0,...,d-1$. Then, the digits of $\ket{i}_{B_{1}...B_{n-1}}$
must also be equal. Similarly, the action of $X$ on $\ket{ik}_{A}$,
where $i=0,...,d-1$ and $k=0,...,d^{n-2}-1$, makes $i=k=0,...,d-1$,
with the corresponding effect on the $k$ index of $\ket{jk}_{B_{1}...B_{n-1}}$.
Therefore, we obtain
\begin{equation}
\tilde{\tau}_{f}(p)=\frac{p}{d^{n-1}}\sum_{i,j=0}^{d-1}\ket{i}_{A}\ket{i}_{B_{1}...B_{n-1}}^{\otimes n-1}\bra{j}_{A}\bra{j}_{B_{1}...B_{n-1}}^{\otimes n-1}+\frac{1-p}{d^{n}}\sum_{i,j=0}^{d-1}\ket{i}_{A}\ket{j}_{B_{1}}\ket{i}_{B_{2}...B_{n}}^{\otimes n-2}\bra{i}_{A}\bra{j}_{B_{1}}\bra{i}_{B_{2}...B_{n}}^{\otimes n-2}.
\end{equation}
Hence, the normalized state after the transformation is
\begin{equation}
\tau_{f}(p)=\frac{\tilde{\tau}_{f}}{\tr\tilde{\tau}_{f}}=p GHZ_{AB_1\cdots B_{n-1}}+\frac{1-p}{d^{2}}\sum_{i,j=0}^{d-1}\ket{i}_{A}\ket{j}_{B_{1}}\ket{i}_{B_{2}...B_{n}}^{\otimes n-2}\bra{i}_{A}\bra{j}_{B_{1}}\bra{i}_{B_{2}...B_{n}}^{\otimes n-2},
\end{equation}
where $GHZ=|GHZ\rangle\langle GHZ|$ and $|GHZ\rangle=\sum_{i=0}^{d-1}|i\rangle^{\otimes n}/\sqrt{d}$ is the $n$-partite $d$-dimensional GHZ state. To show that this state is GME, it is sufficient to find a witness that detects it. The operator
\begin{equation}
W=\frac{1}{d}\one_{AB_1\cdots B_{n-1}}-GHZ_{AB_1\cdots B_{n-1}}
\end{equation}
fits the bill: since the maximum overlap of the GHZ state with a biseparable
state is $1/d$ \cite{biswas_genuine-multipartite-entanglement_2014}, we have that
\begin{equation}
\tr(W\sigma)\geq0
\end{equation}
for all biseparable states $\sigma.$ Moreover,
\begin{equation}
\tr(W\tau_{f}(p))=\frac{d-1-p(d^{2}-1)}{d^{2}},
\end{equation}
which is strictly smaller than zero when
\begin{equation}
p>\frac{1}{d+1}.
\end{equation}
\end{proof}

\begin{lemma}\label{lemma steerable}
Any $n$-partite PEN state over a star network (with the central party labeled by $A=A_1\cdots A_{n-1}$ and the other parties by $\{B_i\}_{i=1}^{n-1}$),
\begin{equation}
\bigotimes_{i=1}^{n-1}\rho^{(i)}_{A_iB_i},
\end{equation}
is not GMNL if at least one of the bipartite states $\rho^{(i)}_{A_iB_i}$ is nonsteerable from $B_i$ to $A_i$.
\end{lemma}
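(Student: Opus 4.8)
The plan is to prove the implication directly, by showing that for \emph{every} choice of local measurements the resulting $n$-partite correlation admits a bilocal decomposition of the form of Eq.~(\ref{eq:BLdistrNpartite}) supported entirely on the single bipartition that isolates the outer party sitting at the nonsteerable edge. Without loss of generality let $\rho^{(1)}_{A_1B_1}$ be the nonsteerable state. That it is nonsteerable from $B_1$ to $A_1$ means it admits a local hidden-state model with $B_1$ as the measuring (untrusted) party and $A_1$ as the steered (trusted) party: there is a fixed ensemble $\{p(\lambda),\sigma^{A_1}_\lambda\}$ such that, for any POVM $\{M_{\alpha_1|\chi_1}\}$ that $B_1$ applies to its share, $\tr_{B_1}\big[(\one_{A_1}\otimes M_{\alpha_1|\chi_1})\rho^{(1)}_{A_1B_1}\big]=\sum_\lambda p(\lambda)\,P_{B_1}(\alpha_1|\chi_1\lambda)\,\sigma^{A_1}_\lambda$ for suitable response functions $P_{B_1}(\alpha_1|\chi_1\lambda)$ (this is Eq.~(\ref{lhs}) with $A$ and $B$ replaced by $B_1$ and $A_1$).

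Next I would insert arbitrary measurements into the Born rule. Let the central party $A$ (holding $A_1\cdots A_{n-1}$) apply $\{E^A_{\alpha|\chi}\}$ and each outer party $B_i$ apply $\{E^{B_i}_{\alpha_i|\chi_i}\}$, so the correlation equals $\tr\big[(E^A_{\alpha|\chi}\otimes\bigotimes_i E^{B_i}_{\alpha_i|\chi_i})\bigotimes_i\rho^{(i)}_{A_iB_i}\big]$. I would first carry out the partial trace over $B_1$ alone, which replaces $\rho^{(1)}_{A_1B_1}$ by the subnormalized conditional state on $A_1$ produced by $B_1$'s measurement, and then substitute the local hidden-state decomposition of that conditional state. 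Pulling the $\lambda$-sum to the front, the correlation takes the form $\sum_\lambda p(\lambda)\,P_{B_1}(\alpha_1|\chi_1\lambda)\,P_{\overline{B_1}}(\alpha\,\alpha_2\cdots\alpha_{n-1}|\chi\,\chi_2\cdots\chi_{n-1},\lambda)$, where $P_{\overline{B_1}}(\cdot|\cdot,\lambda)$ is, by construction, the outcome statistics obtained when $A$ and $B_2,\dots,B_{n-1}$ apply their measurements to the genuine quantum state $\sigma^{A_1}_\lambda\otimes\bigotimes_{i\ge2}\rho^{(i)}_{A_iB_i}$.

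It then remains to verify that this is a valid decomposition in the sense of Eq.~(\ref{eq:BLdistrNpartite}): all the weight sits on $M=\{B_1\}$ with $q_M(\lambda)=p(\lambda)$, which is a normalized distribution; the part $P_{B_1}(\cdot|\cdot,\lambda)$ is a single-party response function, hence trivially nonsignaling; and the part $P_{\overline{B_1}}(\cdot|\cdot,\lambda)$, being the statistics of local quantum measurements on a quantum state, is nonsignaling as well. Since this holds for arbitrary measurements, no measurement on the PEN state can ever yield a GMNL distribution, which is the claim. I do not expect a serious obstacle; the only points that deserve care are the bookkeeping around the central party $A$ --- although $A$ performs a joint measurement on $A_1\cdots A_{n-1}$, conditioning on $B_1$'s outcome leaves subsystem $A_1$ in the fixed state $\sigma^{A_1}_\lambda$, in product (for each $\lambda$) with the remaining subsystems, so $P_{\overline{B_1}}(\cdot|\cdot,\lambda)$ is manifestly a legitimate $(n-1)$-party quantum correlation among $A,B_2,\dots,B_{n-1}$ --- and getting the direction of the steering right, since in the final decomposition it is precisely the outer party $B_1$ that is split off and whose marginal must be reproduced by the classical response function $P_{B_1}(\alpha_1|\chi_1\lambda)$.
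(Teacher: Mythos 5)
Your proof is correct and follows essentially the same route as the paper's: both single out the nonsteerable edge, use the local-hidden-state model for $B_1$'s measurements to peel off a classical response function $P_{B_1}(\alpha_1|\chi_1\lambda)$, and observe that the complementary factor is the nonsignaling statistics of local quantum measurements on $\sigma^{A_1}_\lambda\otimes\bigotimes_{i\ge 2}\rho^{(i)}_{A_iB_i}$, yielding a bilocal decomposition supported on the bipartition $B_1|AB_2\cdots B_{n-1}$. The only difference is bookkeeping: the paper first traces out $A_2\cdots A_{n-1}B_2\cdots B_{n-1}$ to build an effective POVM on $A_1$ and verifies its positivity and normalization explicitly, whereas you trace out $B_1$ first and let the quantum origin of $P_{\overline{B_1}}$ supply those properties automatically.
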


\begin{proof}
We can assume that $\rho^{(1)}_{A_1B_1}$ is nonsteerable from $B_1$ to $A_1$ without loss of generality. We will show that any probability distribution arising from local POVMs on the network state is local across the bipartition $B_1|AB_2\cdots B_{n-1}$. In particular, this shows that the state is bilocal.

Let us assume that party $A$ measures according to the POVM $\{ E_{a|x}\} _{a}$ with outputs $a$ and inputs $x$ and, for every $i=1,\cdots n-1$, party $B_i$ measures according to the POVM $\{ F_{b_i|y_{i}}^{i}\} _{b_{i}}$ with outputs $b_i$ and inputs $y_i$. Then, they generate a probability distribution of the form
\begin{equation}
P(ab_{1}...b_{n-1}|xy_{1}...y_{n-1})=\tr\left[\left(E_{a|x}\otimes \bigotimes_{i=1}^{n-1}F_{b_i|y_{i}}^{i}\right)\left(\rho^{(1)}_{A_1B_1}\otimes \bigotimes_{i=2}^{n-1}\rho^{(i)}_{A_iB_i}\right)\right].
\end{equation}
This distribution can be rewritten as
\begin{equation}
\tr_{A_1B_1}\left(\left(\tr_{A_2\cdots A_{n-1}B_2\cdots B_{n-1}}\left[\left(E_{a|x}\otimes \bigotimes_{i=2}^{n-1}F_{b_i|y_{i}}^{i}\right)\left(\one_{A_1}\otimes \bigotimes_{i=2}^{n-1}\rho^{(i)}_{A_iB_i} \right)\right]\otimes F_{b_1|y_{1}}^{1}\right)\rho^{(1)}_{A_1B_1}\right),
\label{eq:bilocal-sigma}
\end{equation}
where we will show that
\begin{equation}
\begin{aligned}F_{ab_2\cdots b_{n-1}|xy_2\cdots y_{n-1}} & :=\tr_{A_2\cdots A_{n-1}B_2\cdots B_{n-1}}\left[\left(E_{a|x}\otimes \bigotimes_{i=2}^{n-1}F_{b_i|y_{i}}^{i}\right)\left(\one_{A_1}\otimes \bigotimes_{i=2}^{n-1}\rho^{(i)}_{A_iB_i} \right)\right]\end{aligned}
\label{eq:bilocal-mtsH}
\end{equation}
is a POVM element acting on $\mathcal{H}_{A_1}$. By denoting $A_1=:\tilde{A}$, $A_2\cdots A_{n-1}B_2\cdots B_{n-1}=:\tilde{B}$, $ab_{2}...b_{n-1}=:\tilde{a}$, $xy_{2}...y_{n-1}=:\tilde{x}$, $E_{a|x}\otimes \bigotimes_{i=2}^{n-1}F_{b_i|y_{i}}^{i}=:E_{\tilde{a}|\tilde{x}}^{\tilde{A}\tilde{B}}$ and $\bigotimes_{i=2}^{n-1}\rho^{(i)}_{A_iB_i}=:\tau_{\tilde{B}} $, we can rewrite Eq. (\ref{eq:bilocal-mtsH}) as
\begin{equation}
F_{\tilde{a}|\tilde{x}}=\tr_{\tilde{B}}\left(E_{\tilde{a}|\tilde{x}}^{\tilde{A}\tilde{B}}(\one_{\tilde{A}}\otimes\tau_{\tilde{B}})\right).
\end{equation}

To show positivity, we notice that $\tau_{\tilde{B}}$ is a quantum state,
and thus can be written as a convex combination of pure states $\ket{\psi}\bra{\psi}$.
Therefore, to show positivity of $F_{\tilde{a}|\tilde{x}}$ we can assume
\begin{equation}
F_{\tilde{a}|\tilde{x}}=\tr_{\tilde{B}}\left(E_{\tilde{a}|\tilde{x}}^{\tilde{A}\tilde{B}}(\one_{\tilde{A}}\otimes\ket{\psi}\bra{\psi})\right)\equiv\bra{\psi}E_{\tilde{a}|\tilde{x}}^{\tilde{A}\tilde{B}}\ket{\psi}.
\end{equation}
If $F_{\tilde{a}|\tilde{x}}$ were not positive, there would exist $\ket{z}\in\mathcal{H}_{A}$
such that
\begin{equation}
\bra{z}F_{\tilde{a}|\tilde{x}}\ket{z}<0,
\end{equation}
which would imply that
\begin{equation}
\bra{\psi}\bra{z}E_{\tilde{a}|\tilde{x}}^{\tilde{A}\tilde{B}}\ket{z}\ket{\psi}<0,
\end{equation}
and hence that $E_{\tilde{a}|\tilde{z}}^{\tilde{A}\tilde{B}}$ would not be positive. But this is
false, as $E_{\tilde{a}|\tilde{x}}^{\tilde{A}\tilde{B}}$ is a POVM element. Therefore, $F_{\tilde{a}|\tilde{x}}\succcurlyeq0$.

Normalization of $F_{\tilde{a}|\tilde{x}}$ is guaranteed by the normalization of
$E_{\tilde{a}|\tilde{x}}^{\tilde{A}\tilde{B}}$, as
\begin{equation}
\begin{aligned}\sum_{\tilde{a}}F_{\tilde{a}|\tilde{x}} & =\tr_{\tilde{B}}\left(\sum_{\tilde{a}}E_{\tilde{a}|\tilde{x}}^{\tilde{A}\tilde{B}}(\one_{\tilde{A}}\otimes\tau_{\tilde{B}})\right)\\
 & =\tr_{\tilde{B}}\left(\one_{\tilde{A}}\otimes\tau_{\tilde{B}}\right)=\one_{\tilde{A}}.
\end{aligned}
\end{equation}

Therefore, since equation (\ref{eq:bilocal-sigma}) expresses the
distribution achieved by the parties in the network as two POVM elements
acting on the nonsteerable state $\rho^{(1)}_{A_1B_1}$ we can easily conclude that
\begin{equation}
P(ab_{1}...b_{n-1}|xy_{1}...y_{n-1})=\sum_{\lambda}p_{\lambda}P_{1}(b_{1}|y_{1}\lambda)P_{2}(ab_2\cdots b_{n-1}|xy_2\cdots y_{n-1}\lambda),
\end{equation}where
\begin{equation}
P_{2}(ab_2\cdots b_{n-1}|xy_2\cdots y_{n-1}\lambda)=tr(F_{ab_2\cdots b_{n-1}|xy_2\cdots y_{n-1}} \sigma_\lambda).
\end{equation}

Now, the particular form of $F_{ab_2\cdots b_{n-1}|xy_2\cdots y_{n-1}}$ defined in Eq. (\ref{eq:bilocal-mtsH}) ensures that $P_{2}$ is nonsignaling and thus $P(ab_{1}...b_{n-1}|xy_{1}...y_{n-1})$ is local local across the bipartition $B_1|AB_2\cdots B_{n-1}$.
\end{proof}

\begin{observation}Note that if the state $\rho^{(1)}_{A_1B_1}$ in Lemma \ref{lemma steerable} is local (but steerable) the previous proof still works, but we can only conclude that the distribution $P(ab_{1}...b_{n-1}|xy_{1}...y_{n-1})$ is bilocal according to Svetlichny's definition of bilocality \cite{svetlichny_distinguishing_1987}, where the probability distributions of the bipartition elements are not required to be nonsignaling. However, the operational definition that is used throughout this work requires, in addition, that the local components $P_{1}$ and $P_{2}$ are nonsignaling.
\end{observation}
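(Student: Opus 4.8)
The plan is to re-examine the proof of Lemma \ref{lemma steerable} and isolate the single point at which nonsteerability of $\rho^{(1)}_{A_1B_1}$ from $B_1$ to $A_1$ is actually invoked, and then to ask what survives when this hypothesis is weakened to mere Bell-locality of $\rho^{(1)}_{A_1B_1}$. The key structural fact, exhibited in Eqs.\ (\ref{eq:bilocal-sigma})--(\ref{eq:bilocal-mtsH}), is that the full network distribution collapses to a bipartite expression
\begin{equation}
P(ab_1\cdots b_{n-1}|xy_1\cdots y_{n-1})=\tr\left[(F_{\tilde a|\tilde x}\otimes F_{b_1|y_1}^1)\,\rho^{(1)}_{A_1B_1}\right],
\end{equation}
with $\{F_{\tilde a|\tilde x}\}$ an effective POVM on $A_1$ (built from the measurements of $A,B_2,\cdots,B_{n-1}$ against the states $\rho^{(i)}$) and $\{F_{b_1|y_1}^1\}$ the genuine POVM on $B_1$. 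Hence any bipartite model for $\rho^{(1)}_{A_1B_1}$, applied to this pair of POVMs, produces a decomposition of the distribution across the cut $B_1|AB_2\cdots B_{n-1}$.

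First I would feed the Bell-local hidden-variable model of $\rho^{(1)}_{A_1B_1}$ into the two effective POVMs. Bell-locality guarantees a decomposition $\tr[(M\otimes N)\rho^{(1)}_{A_1B_1}]=\sum_\lambda p(\lambda)P_A(M|\lambda)P_B(N|\lambda)$ valid for all local operators; setting $M=F_{\tilde a|\tilde x}$ and $N=F_{b_1|y_1}^1$ gives
\begin{equation}
P=\sum_\lambda p(\lambda)\,P_A(\tilde a|\tilde x\lambda)\,P_B(b_1|y_1\lambda).
\end{equation}
The factor $P_B(b_1|y_1\lambda)$ is a single-party response function and hence trivially nonsignaling, while $P_A(\tilde a|\tilde x\lambda)$ supplies the other side of the cut. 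This is exactly a decomposition of the Svetlichny type, in which the two components of each term are permitted to be arbitrary response functions, so the first assertion of the Observation follows at once.

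The crux---and the reason the conclusion cannot be strengthened to operational bilocality---lies in comparing $P_A(\tilde a|\tilde x\lambda)$ with its counterpart in Lemma \ref{lemma steerable}. There, nonsteerability from $B_1$ to $A_1$ produced a genuine ensemble $\{p(\lambda),\sigma_\lambda\}$ of quantum states $\sigma_\lambda$ on $A_1$, so the $A_1$-side factor was $\tr(F_{\tilde a|\tilde x}\sigma_\lambda)=\tr[(E_{a|x}\otimes\bigotimes_{i=2}^{n-1}F^i_{b_i|y_i})(\sigma_\lambda\otimes\bigotimes_{i=2}^{n-1}\rho^{(i)}_{A_iB_i})]$. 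Because this is the outcome distribution of a genuine local measurement on a genuine multipartite quantum state, it is automatically nonsignaling among $A,B_2,\cdots,B_{n-1}$---precisely the condition demanded by the operational definition of bilocality. Under the weaker Bell-locality hypothesis, by contrast, the hidden variable delivers only a classical response function $P_A(\tilde a|\tilde x\lambda)$ consistent with the POVM $\{F_{\tilde a|\tilde x}\}$; nothing forces this response to be reproducible by any fixed quantum state, and in particular nothing prevents it from signaling between, say, $A$ and $B_2$. I expect this gap---articulating why the quantum state $\sigma_\lambda$ is exactly what transfers the nonsignaling property to the $\overline{B_1}$-side component, and why a purely classical local model cannot do so---to be the main conceptual point of the argument, rather than any technical computation.
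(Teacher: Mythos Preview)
Your proposal is correct and follows exactly the route the paper intends: the Observation is not given a separate proof in the paper but is stated as a remark pointing back to the proof of Lemma \ref{lemma steerable}, and you have correctly unpacked that remark by isolating the one step where the local-hidden-\emph{state} model (as opposed to a mere local-hidden-\emph{variable} model) is needed---namely, to guarantee that $P_2(\tilde a|\tilde x\lambda)=\tr(F_{\tilde a|\tilde x}\sigma_\lambda)$ arises from a genuine quantum state $\sigma_\lambda$ on $A_1$ and is therefore nonsignaling among $A,B_2,\dots,B_{n-1}$. Your explanation of why a bare LHV response $P_A(\tilde a|\tilde x\lambda)$ need not respect this internal nonsignaling constraint, and hence yields only a Svetlichny-type decomposition, is precisely the content of the Observation.
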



In order to prove the third item in Theorem \ref{thm superactiv} we will show that taking many copies of $\tau(p)$ makes it possible to win a generalization of the Khot-Vishnoi game with a higher probability than with bilocal resources.

The Khot-Vishnoi game \cite{khot_unique_2005,buhrman_near-optimal_2012} is parametrized by a number $v$, which is assumed to be a power of 2, and a noise parameter $\eta\in[0,1/2]$. Consider the group $\{0,1\}^{v}$
of all $v$-bit strings, with operation $\oplus$ denoting bitwise
modulo 2 addition, and the subgroup $H$ of all Hadamard codewords.
The subgroup $H$ partitions the group $\{0,1\}^{v}$ into $2^{v}/v$
cosets of $v$ elements each. These cosets will act as questions,
and answers will be elements of the question cosets. The referee chooses
a uniformly random coset $[x]$, as well as a string $z\in\{0,1\}^{v}$
where each bit $z(i)$ is chosen independently and is 1 with probability
$\eta$ and 0 otherwise. Alice's question is the coset $[x]$, which
can be thought of as $u\oplus H$ for a uniformly random $u\in\{0,1\}^{v}$,
while Bob's is the coset $[x\oplus z]$, which can be thought of as
$u\oplus z\oplus H$. The aim of the players is to guess the string
$z$, and thus they must output $a\in[x]$ and $b\in[x\oplus z]$
such that $a\oplus b=z$. Ref. \cite{buhrman_near-optimal_2012} showed
that any local strategy for Alice and Bob, implemented by a distribution
denoted by $P_{\text{local}}$, achieves a winning probability of
\begin{equation}
\left\langle G_{KV},P_{\text{local}}\right\rangle \leq\frac{v}{v^{1/(1-\eta)}}.
\end{equation}

A higher winning probability can be obtained with a distribution
$P_{\text{max}}$ arising from certain projective measurements on
the maximally entangled state:
\begin{equation}
\left\langle G_{KV},P_{\text{max}}\right\rangle \geq(1-2\eta)^{2}.\label{eq:KV-quantum-bipart}
\end{equation}

Picking the value $\eta=1/2-1/\log v$ leads to the bounds
\begin{equation}
\left\langle G_{KV},P_{\text{local}}\right\rangle \leq\frac{C}{v}\quad\left\langle G_{KV},P_{\text{max}}\right\rangle \geq D/\log^{2}v,
\end{equation}
for universal constants $C,D$.

We will now introduce an extension of the previous game to the star network via a lemma. Then, we will prove a second lemma to bound the probability of winning with a bilocal strategy, before showing the superactivation result.
\begin{lemma}\label{KV general}
\label{lem:KVmulti}The Khot-Vishnoi game can be extended to the star
network by letting Alice and each of $Bob$$_{i}$, for $i=1,...,K$,
play the bipartite Khot-Vishnoi game. This defines a game whose coefficients
are normalized, i.e., satisfy
\begin{equation}
\sum_{\substack{x_{1},...,x_{K}\\
y_{1},...,y_{K}
}
}\max_{\substack{a_{1},...,a_{K}\\
b_{1},...,b_{K}
}
}\widetilde{G}_{a_{1}...a_{K}b_{1}...b_{K}|x_{1}...x_{K}y_{1}...y_{K}}\leq1.
\end{equation}
\end{lemma}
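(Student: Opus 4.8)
The plan is to recognize that the extended game $\widetilde G$ is nothing but the $K$-fold tensor product of the bipartite Khot--Vishnoi game, and that normalization of a product game is inherited multiplicatively from that of its factors. Since Alice plays an independent instance of the bipartite game with each $\mathrm{Bob}_i$, the combined coefficients factorize as $\widetilde G_{a_{1}\dots a_{K}b_{1}\dots b_{K}|x_{1}\dots x_{K}y_{1}\dots y_{K}}=\prod_{i=1}^{K}G_{a_{i}b_{i}|x_{i}y_{i}}$, where $G_{ab|xy}\geq 0$ are the coefficients of the bipartite Khot--Vishnoi game. Hence it suffices to establish the bipartite normalization $\sum_{x,y}\max_{a,b}G_{ab|xy}\leq 1$ and then raise it to the $K$-th power.

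First I would prove the bipartite bound. By construction of a nonlocal game, $G_{ab|xy}$ is the probability that the referee asks the question pair $(x,y)$ times a $\{0,1\}$-valued winning predicate, averaged over whatever part of the referee's randomness is not revealed by the questions (for the Khot--Vishnoi game, the noise string $z$, which is not determined by the coset pair $(x,y)$). In particular $G_{ab|xy}\leq\pi(x,y):=\Pr[\text{questions }(x,y)]$, and since $\pi$ is a probability distribution,
\begin{equation}
\sum_{x,y}\max_{a,b}G_{ab|xy}\;\leq\;\sum_{x,y}\pi(x,y)\;=\;1.
\end{equation}
(One can make this explicit using $G_{ab|xy}\propto\pi_\eta(a\oplus b)\,[a\in x]\,[b\in y]$ with $\pi_\eta$ the $\eta$-biased product measure and $a\oplus b\in x\oplus y$, but the abstract bound is all that is needed.)

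Second, nonnegativity of the coefficients makes the maximum of $\prod_i G_{a_ib_i|x_iy_i}$ over the mutually independent answer pairs $(a_i,b_i)$ equal to the product of the individual maxima, and then the sum over the independent question pairs $(x_i,y_i)$ factorizes too:
\begin{equation}
\sum_{\substack{x_{1},\dots,x_{K}\\ y_{1},\dots,y_{K}}}\;\max_{\substack{a_{1},\dots,a_{K}\\ b_{1},\dots,b_{K}}}\;\prod_{i=1}^{K}G_{a_{i}b_{i}|x_{i}y_{i}}\;=\;\prod_{i=1}^{K}\Big(\sum_{x_{i},y_{i}}\max_{a_{i},b_{i}}G_{a_{i}b_{i}|x_{i}y_{i}}\Big)\;\leq\;\prod_{i=1}^{K}1\;=\;1,
\end{equation}
which is exactly the claim. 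There is no genuine obstacle here; the content is bookkeeping. The two points deserving care are: fixing the precise combination rule behind $\widetilde G$ --- one must take the ``all $K$ subgames are won'' predicate, which is what makes the coefficient tensor a true product of $K$ copies of $G$ (an OR-type rule would break the factorization and the bound) --- and the factorization of the maximum over the answer tuples, which relies crucially on $G_{ab|xy}\geq 0$. I would also note in passing that the inequality is generically strict, owing to the residual coset/noise ambiguity of the Khot--Vishnoi referee, so $\widetilde G$ is in fact sub-normalized; but $\leq 1$ is exactly what is required for it to count as a legitimate game in the subsequent superactivation argument.
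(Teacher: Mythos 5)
Your proposal is correct and follows essentially the same route as the paper: define $\widetilde{G}$ as the product of the $K$ bipartite coefficient tensors, verify the single-game normalization $\sum_{x,y}\max_{a,b}G_{ab|xy}\leq1$, and factorize the sum-of-max using nonnegativity of the coefficients. The only difference is cosmetic — the paper states the bipartite normalization as something ``one can check,'' whereas you supply the short argument via $G_{ab|xy}\leq\Pr[(x,y)]$, which is a welcome addition but not a different method.
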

\begin{proof}
Consider a star network of $K$ edges, each of which connects Alice
to Bob$_{i}$ for $i=1,...,K$. Alice will play the bipartite Khot-Vishnoi
game $G_{KV}$ with each Bob$_{i}$. Denoting Alice's inputs and outputs
as $x_{1},...,x_{K}$ and $a_{1},...,a_{K}$ respectively, and Bob$_{i}$'s
input and output as $y_{i},$ $b_{i}$ respectively, we denote the
coefficients of each game as $G_{a_{i}b_{i}|x_{i}y_{i}}$. Hence, the
$(K+1)$-partite game being played on the star network, which we denote
by $\widetilde{G}$, has coefficients
\begin{equation}
\widetilde{G}_{a_{1}...a_{K}b_{1}...b_{K}|x_{1}...x_{K}y_{1}...y_{K}}=\prod_{i=1}^{K}G_{a_{i}b_{i}|x_{i}y_{i}}.
\end{equation}
Since $G_{KV}$ is a game, so is $\widetilde{G}$, i.e. all of its
coefficients are positive. Moreover, one can check that the coefficients
$G_{a_{i}b_{i}|x_{i}y_{i}}$ satisfy the normalization condition
\begin{equation}
\sum_{x_{i},y_{i}}\max_{a_{i},b_{i}}G_{a_{i}b_{i}|x_{i}y_{i}}\leq1\label{eq:KV-normn}
\end{equation}
for all $i\in[K]$, and hence
\begin{equation}
\begin{aligned}\sum_{\substack{x_{1},...,x_{K}\\
y_{1},...,y_{K}
}
}\max_{\substack{a_{1},...,a_{K}\\
b_{1},...,b_{K}
}
}\widetilde{G}_{a_{1}...a_{K}b_{1}...b_{K}|x_{1}...x_{K}y_{1}...y_{K}} & =\sum_{\substack{x_{1},...,x_{K}\\
y_{1},...,y_{K}
}
}\max_{\substack{a_{1},...,a_{K}\\
b_{1},...,b_{K}
}
}\prod_{i=1}^{K}G_{a_{i}b_{i}|x_{i}y_{i}}\\
 & =\prod_{i=1}^{K}\sum_{x_{i},y_{i}}\max_{a_{i},b_{i}}G_{a_{i}b_{i}|x_{i}y_{i}}\\
 & \leq1.
\end{aligned}
\label{eq:KVmulti-normn}
\end{equation}
In fact, this normalization condition also holds if we take only a
subset of games $G_{KV}$, i.e. the product of $G_{a_{i}b_{i}|x_{i}y_{i}}$
for $i$ in some subset of $[K]$.
\end{proof}
\begin{lemma}
\label{lem:KV-bl}The extension of the Khot-Vishnoi game to the star
network is such that the winning probability using any bilocal strategy
is bounded above by $C/v$, where $v$ is the parameter of the game
and $C$ is a universal constant.
\end{lemma}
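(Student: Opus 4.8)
The plan is to leverage the product structure of the network game $\widetilde{G}$ established in Lemma~\ref{KV general}, together with the bipartite bound $\langle G_{KV},P_{\mathrm{local}}\rangle\le C/v$ and the nonsignaling requirement built into the definition \eqref{eq:BLdistrNpartite} of bilocality. Since the game functional is linear in the distribution, and a bilocal $P$ is by definition a convex combination over bipartitions $M|\overline{M}$ of the $K+1$ parties of products $P_M\otimes P_{\overline{M}}$ with $P_M,P_{\overline{M}}$ nonsignaling, it is enough to bound $\langle\widetilde{G},P_M\otimes P_{\overline{M}}\rangle$ for one fixed such product. Relabelling $M\leftrightarrow\overline{M}$ if necessary, I would assume Alice lies in $M$; since $\overline{M}\neq\emptyset$ it contains $B_j$ for some $j$, so the bipartite Khot--Vishnoi subgame played by Alice and $B_j$ sits across the cut, and this is where the local bound will be used.

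Next I would eliminate every edge but the $j$-th. For each $i\neq j$ use that $\widetilde{G}$ factorizes to replace the $i$-th factor by $\gamma_i(x_i,y_i):=\max_{a_i,b_i}G_{a_ib_i|x_iy_i}$ (independent of the outputs of edge $i$), and sum the product distribution over those outputs. At this point the nonsignaling conditions do the work: marginalizing $P_{\overline{M}}$ over the outputs of the Bobs in $\overline{M}$ other than $B_j$ yields a response distribution $p_{\overline{M}}(b_j|y_j)$ for $B_j$ that depends on $y_j$ alone, while marginalizing $P_M$ over Alice's outputs on edges $\neq j$ and over the outputs of the Bobs lying in $M$ yields a distribution $p_M(a_j|\mathbf{x})$ for Alice's $j$-th output that is independent of every $B$-side input (it may still depend on Alice's whole input tuple $\mathbf{x}$, which is harmless). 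Resumming the inputs $y_i$ with $i\neq j$ then leaves
\[
\langle\widetilde{G},P_M\otimes P_{\overline{M}}\rangle\;\le\;\sum_{\mathbf{x}}\Bigl(\prod_{i\neq j}\sum_{y_i}\gamma_i(x_i,y_i)\Bigr)\Bigl(\sum_{x_j,y_j}\sum_{a_j,b_j}G_{a_jb_j|x_jy_j}\,p_M(a_j|\mathbf{x})\,p_{\overline{M}}(b_j|y_j)\Bigr).
\]

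To close, fix the inputs $\{x_i\}_{i\neq j}$: then $(a_j,x_j)\mapsto p_M(a_j|\mathbf{x})$ and $(b_j,y_j)\mapsto p_{\overline{M}}(b_j|y_j)$ together form a product, hence local, strategy for the bipartite Khot--Vishnoi game, so the inner bracket equals $\langle G_{KV},P_{\mathrm{local}}\rangle\le C/v$ for each such choice. Pulling this uniform bound outside and applying the single-game normalization $\sum_{x_i,y_i}\gamma_i(x_i,y_i)\le1$ from \eqref{eq:KV-normn} to each of the $K-1$ remaining edges collapses the outer sum to $1$, which gives $\langle\widetilde{G},P_M\otimes P_{\overline{M}}\rangle\le C/v$ and hence the lemma. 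I expect the only genuinely delicate point to be the bookkeeping around the two marginalizations---verifying that the nonsignaling constraints on $P_M$ and $P_{\overline{M}}$ strip away exactly the unwanted input dependencies so that the surviving object on edge $j$ really is a bona fide local bipartite strategy; once that is pinned down, the rest is elementary resummation together with the two already-available bounds.
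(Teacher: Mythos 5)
Your proof is correct, but it takes a genuinely different route from the paper's. The paper's proof splits the bilocal distributions into two cases (Alice separated from all Bobs, or Alice grouped with some subset of Bobs), reduces each case to a $K$-fold or $(K-k)$-fold parallel repetition of the Khot--Vishnoi game---constructing along the way a dominating conditional distribution $\tilde{P}$ that absorbs the subgames played inside Alice's partition element---and then bounds the parallel repetitions by redoing the Cauchy--Schwarz/hypercontractivity computation of Buhrman et al.\ over $\{0,1\}^{vK}$; the worst of the resulting bounds, $v^{K-k}/v^{(K-k)/(1-\eta)}$ at $k=K-1$, is what yields $C/v$. You instead isolate a single edge $j$ crossing the bipartition, majorize every other game coefficient by $\gamma_i(x_i,y_i)=\max_{a_i,b_i}G_{a_ib_i|x_iy_i}$ (subnormalized by the condition in Eq.\ (\ref{eq:KV-normn})), and use nonsignaling of $P_M$ and $P_{\overline{M}}$ to show that the surviving marginals on edge $j$ form a product, hence local, bipartite strategy, to which the already-quoted single-game local bound applies; your bookkeeping of the two marginalizations is sound (nonsignaling between Alice and the Bobs in her block kills the dependence on their $y_i$'s, and nonsignaling among the Bobs in $\overline{M}$ reduces $P_{\overline{M}}$ to $p(b_j|y_j)$). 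What your route buys is economy and transparency: it needs only the bipartite local bound as a black box, avoids the multipartite hypercontractivity computation and the $\tilde{P}$ construction entirely, and makes clear why the final answer is exactly the single-game value $C/v$. What the paper's route buys is strictly stronger bounds (exponentially small in the number of crossing edges) for more balanced bipartitions---strength that is not actually used in the superactivation argument, which only compares against $C/v$. Both arguments establish the lemma as stated.
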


\begin{proof}
Consider the game in Lemma \ref{lem:KVmulti}. In order to bound the winning
probability of bilocal strategies, we consider the two possible types
of bilocal distributions $P_{BL}$: those local in a bipartition that
separates Alice from all the Bobs, and those where some of the Bobs
are in Alice's partition element.

First, we take a distribution of the form
\begin{equation}
P_{1}(a_{1},...,a_{K}|x_{1},...,x_{K})P_{2}(b_{1},...,b_{K}|y_{1},...,y_{K})\label{eq:KVparallelprob}
\end{equation}
for each $a_{i},b_{i},x_{i},y_{i}$, $i\in[K]$.
Using such a distribution to play $\widetilde{G}$,
the parties are effectively playing the $K$-fold parallel repetition \cite{raz_parallel_1998} of $G_{KV}$, which we denote as $G_{KV}^{\otimes K}$. To bound their
winning probability we will use the same techniques as in Refs. \cite{amr_unbounded_2020, buhrman_near-optimal_2012}. Recall that, for the bipartite game, the questions are cosets of $H$
in the group $\{0,1\}^{v}$, which can be thought of as $u\oplus H$
for Alice and $u\oplus z\oplus H$ for Bob (where $u\in\{0,1\}^{v}$
is sampled uniformly and $z\in\{0,1\}^{v}$ is sampled bitwise independently
with noise $\eta$), and the answers are elements of the question
cosets. Without loss of generality, we can assume Alice and Bob's
strategy is deterministic, and identify it with Boolean functions
$A,B:\{0,1\}^{v}\rightarrow\{0,1\}$ which take the value 1 for exactly
one element of each coset. That is, for each question, $A,B$ respectively
pick out Alice's and Bob's answer. Since the players win if and only
if their answers $a,b$ satisfy $a\oplus b=z$, we have that for all
$u,z$,
\begin{equation}
\sum_{h\in H}A(u\oplus h)B(u\oplus z\oplus h)
\end{equation}
is 1 if the players win on inputs $u\oplus H$, $u\oplus z\oplus H$,
and 0 otherwise. Therefore, the winning probability is
\begin{equation}
\begin{aligned}\mathop{\mathbb{E}}_{u,z}\left[\sum_{h\in H}A(u\oplus h)B(u\oplus z\oplus h)\right] & =\sum_{h\in H}\mathop{\mathbb{E}}_{u,z}\left[A(u\oplus h)B(u\oplus z\oplus h)\right]\\
 & =v\mathop{\mathbb{E}}_{u,z}\left[A(u)B(u\oplus z)\right],
\end{aligned}
\end{equation}
since for all $h,$ the distribution of $u\oplus h$ is uniform.

For the parallel repetition $G_{KV}^{\otimes K}$, Alice and Bob must
pick an answer for each copy of the game, so we can identify their
strategy with some new Boolean functions $A,B:\{0,1\}^{vK}=\{0,1\}^{v}\times\dots\times\{0,1\}^{v}\rightarrow\{0,1\}$
which, restricted to each set of $K$ questions (i.e. $K$ cosets),
take the value 1 for exactly one element. Alice's question of $G_{KV}^{\otimes K}$
is given by $u=(u_{1},...,u_{K})$ where each $u_{i}\in\{0,1\}^{v},\,i\in[K]$
is sampled uniformly. But this is equivalent to sampling $u$ uniformly
in $\{0,1\}^{vK}$. Similarly, $z=(z_{1},...,z_{K})$ is sampled bitwise
independently, since each $z_{i}$ is. Therefore, the winning probability
is given by
\begin{equation}
\begin{aligned}\mathop{\mathbb{E}}_{\substack{u_{i},z_{i}\\
i\in[K]
}
} & \left[\sum_{\substack{h_{i}\in H_{i}\\
i\in[K]
}
}A((u_{1},...,u_{K})\oplus(h_{1},...,h_{K}))B((u_{1},...,u_{K})\oplus(z_{1},...,z_{K})\oplus(h_{1},...,h_{K}))\right]\\
 & =\sum_{\substack{h_{i}\in H_{i}\\
i\in[K]
}
}\mathop{\mathbb{E}}_{u,z}\left[A(u\oplus(h_{1},...,h_{K}))B(u\oplus z\oplus(h_{1},...,h_{K}))\right]\\
 & =v^{K}\mathop{\mathbb{E}}_{u,z}\left[A(u)B(u\oplus z)\right]
\end{aligned}
\end{equation}
since there are $v^{K}$ choices of strings of the form $(h_{1},...,h_{K})\in H_{1}\times...\times H_{K}$.
To bound $\mathbb{E}_{u,z}\left[A(u)B(u\oplus z)\right]$, we follow
the computation of Ref. \cite[Theorem 4.1]{buhrman_near-optimal_2012},
which uses the Cauchy-Schwarz and hypercontractive inequalities, and
obtain
\begin{equation}
\mathbb{E}_{u,z}\left[A(u)B(u\oplus z)\right]\leq\frac{1}{v^{K/(1-\eta)}}.
\end{equation}

If, instead, the parties share a distribution of the form
\begin{equation}
P_{1}(\alpha,\{b_{i}\}_{i\leq k}|\chi,\{y_{i}\}_{i\leq k})P_{2}(\{b_{i}\}_{i>k}|\{y_{i}\}_{i>k}),\label{eq:KVblprob}
\end{equation}
for each $\alpha\equiv(a_{1},...,a_{K}),$ $\chi\equiv(x_{1},...,x_{K}),$
$b_{i},y_{i}$, $i\in[K]$, for some $k\in[K]$, then their winning
probability is given by
\begin{equation}
\begin{aligned}\Bigl\langle\widetilde{G},P_{1}P_{2}\Bigr\rangle=\sum_{\substack{a_{i},b_{i},x_{i},y_{i}\\
i\in[K]
}
}\prod_{i=1}^{K} & G_{a_{i}b_{i}|x_{i}y_{i}}P_{1}(\alpha,\{b_{i}\}_{i\leq k}|\chi,\{y_{i}\}_{i\leq k})P_{2}(\{b_{i}\}_{i>k}|\{y_{i}\}_{i>k})\\
=\sum_{\substack{a_{i},b_{i},x_{i},y_{i}\\
k<i\leq K
}
}\prod_{i=k+1}^{K} & G_{a_{i}b_{i}|x_{i}y_{i}}\left(\sum_{\substack{a_{i},b_{i},x_{i},y_{i}\\
1\leq i\leq k
}
}\prod_{i=1}^{k}G_{a_{i}b_{i}|x_{i}y_{i}}P_{1}(\alpha,\{b_{i}\}_{i\leq k}|\chi,\{y_{i}\}_{i\leq k})\right)\\
 & \times P_{2}(\{b_{i}\}_{i>k}|\{y_{i}\}_{i>k})
\\=\sum_{\substack{a_{i},b_{i},x_{i},y_{i}\\
k<i\leq K
}
}\prod_{i=k+1}^{K} & G_{a_{i}b_{i}|x_{i}y_{i}}f(\{a_{i}\}_{i>k},\{x_{i}\}_{i>k})P_{2}(\{b_{i}\}_{i>k}|\{y_{i}\}_{i>k}),
\end{aligned}
\label{eq:KV-local}
\end{equation}
where we define
\begin{equation}
f(\{a_{i}\}_{i>k},\{x_{i}\}_{i>k})=\sum_{\substack{a_{i},b_{i},x_{i},y_{i}\\
1\leq i\leq k
}
}\prod_{i=1}^{k}G_{a_{i}b_{i}|x_{i}y_{i}}P_{1}(\alpha,\{b_{i}\}_{i\leq k}|\chi,\{y_{i}\}_{i\leq k}).
\end{equation}
We will now show that we can define a probability distribution $\tilde{P}(\{a_{i}\}_{i>k}|\{x_{i}\}_{i>k})$
all of whose components are greater than or equal to those of $f(\{a_{i}\}_{i>k},\{x_{i}\}_{i>k})$,
and use this, together with the previous parallel repetition result,
to bound $\left\langle \widetilde{G},P_{1}P_{2}\right\rangle $. First,
note that the function $f$ is pointwise positive and such that
\begin{equation}
\sum_{a_{k+1},...,a_{K}}f(\{a_{i}\}_{i>k},\{x_{i}\}_{i>k})\leq1
\end{equation}
for all $x_{k+1},...,x_{K}$. Indeed, fixing $x_{k+1},...,x_{K}$,
we have
\begin{equation}
\begin{aligned}\sum_{a_{k+1},...,a_{K}} & \sum_{\substack{a_{i},b_{i},x_{i},y_{i}\\
1\leq i\leq k
}
}\prod_{i=1}^{k}G_{a_{i}b_{i}|x_{i}y_{i}}P_{1}(\alpha,\{b_{i}\}_{i\leq k}|\chi,\{y_{i}\}_{i\leq k})\\
 & =\sum_{\substack{a_{i},b_{i},x_{i},y_{i}\\
1\leq i\leq k
}
}\prod_{i=1}^{k}G_{a_{i}b_{i}|x_{i}y_{i}}\left(\sum_{a_{k+1},...,a_{K}}P_{1}(\alpha,\{b_{i}\}_{i\leq k}|\chi,\{y_{i}\}_{i\leq k})\right)\\
 & \leq\sum_{\substack{x_{i},y_{i}\\
1\leq i\leq k
}
}\prod_{i=1}^{k}\max_{a_{i},b_{i}}G_{a_{i}b_{i}|x_{i}y_{i}}\left(\sum_{\substack{a_{i},i\in[K]\\
b_{i},i\leq k
}
}P_{1}(\alpha,\{b_{i}\}_{i\leq k}|\chi,\{y_{i}\}_{i\leq k})\right)\\
 & =\sum_{\substack{x_{i},y_{i}\\
1\leq i\leq k
}
}\prod_{i=1}^{k}\max_{a_{i},b_{i}}G_{a_{i}b_{i}|x_{i}y_{i}}\leq1,
\end{aligned}
\end{equation}
where the last inequality follows from equation (\ref{eq:KVmulti-normn}).
Thus, for each $x_{k+1},...,x_{K}$ we can define $\tilde{P}(\{a_{i}\}_{i>k}|\{x_{i}\}_{i>k})$
to have the same elements as $f$ except when all $a_{i}=0$:
\begin{equation}
\tilde{P}(\{a_{i}\}_{i>k}|\{x_{i}\}_{i>k})=\begin{cases}
f(\{a_{i}\}_{i>k},\{x_{i}\}_{i>k}) & \text{if }a_{i}\neq0\text{ for some }i>k,\\
1-\sum_{\left\{ a_{i}^{\prime}\right\} _{i>k}\neq\overrightarrow{0}}f(\{a_{i}^{\prime}\}_{i>k},\{x_{i}\}_{i>k}) & \text{if }a_{i}=0\text{ for all }i>k.
\end{cases}
\end{equation}
Then, $\tilde{P}$ is a probability distribution, all of whose components
are larger than or equal to those of $f$, and hence from equation
(\ref{eq:KV-local}) we deduce that
\begin{equation}
\left\langle \widetilde{G},P_{1}P_{2}\right\rangle \leq\sum_{\substack{a_{i},b_{i},x_{i},y_{i}\\
k<i\leq K
}
}\prod_{i=k+1}^{K}G_{a_{i}b_{i}|x_{i}y_{i}}\tilde{P}(\{a_{i}\}_{i>k}|\{x_{i}\}_{i>k})P_{2}(\{b_{i}\}_{i>k}|\{y_{i}\}_{i>k}).
\end{equation}
But the right-hand side is the winning probability of the $(K-k)$-fold
parallel repetition of $G_{KV}$ using the bilocal distribution $\tilde{P}P_{2}$
which, repeating the calculation above, can be found to be bounded
as
\begin{equation}
\left\langle \widetilde{G},P_{1}P_{2}\right\rangle \leq\frac{v^{K-k}}{v^{(K-k)/(1-\eta)}}.
\end{equation}
This exhausts the local strategies available to the players. Comparing
the bound just obtained to the one from the distribution in equation
(\ref{eq:KVparallelprob}), we find $v^{K}\geq v^{K-k}\geq v$, since
$v>1$ and $K>k$. Hence,
\begin{equation}
\frac{v^{K}}{v^{K/(1-\eta)}}\leq\frac{v^{K-k}}{v^{(K-k)/(1-\eta)}}\leq\frac{v}{v^{/(1-\eta)}}.
\end{equation}
Taking $\eta=1/2-1/\log v$, we have that, for any bilocal distribution
$P_{\text{BL}}$,
\begin{equation}
\left\langle \widetilde{G},P_{\text{BL}}\right\rangle \leq\frac{C}{v}\label{eq:KV-BLbound}
\end{equation}
for some constant $C$.
\end{proof}

We are now ready to prove item iii) in Theorem \ref{thm superactiv}. It will be entailed by the following Lemma.
\begin{lemma}
The $n$-partite PEN state over a star network (with the central party labeled by $A=A_1\cdots A_{n-1}$ and the other parties by $\{B_i\}_{i=1}^{n-1}$),
\begin{equation}
\tau(p)=\rho_{A_1B_1}(p)\otimes\bigotimes_{i=2}^{n-1}\phi^+_{A_iB_i},
\end{equation}
gives rise to GMNL by taking many copies if the isotropic state $\rho_{A_1B_1}(p)$ is entangled.
\end{lemma}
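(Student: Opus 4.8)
The plan is to produce, for $k$ large enough, explicit local measurements on $\tau(p)^{\otimes k}$ whose output correlations beat the bilocal bound for the star-network Khot-Vishnoi game $\widetilde G$ introduced in Lemma~\ref{lem:KVmulti}, namely $\langle\widetilde G,P_{\text{BL}}\rangle\le C/v$ for every bilocal $P_{\text{BL}}$ (Lemma~\ref{lem:KV-bl}), where $v$ is the common game parameter. Grouping the $k$ tensor factors edge by edge, $\tau(p)^{\otimes k}$ carries $\rho(p)^{\otimes k}$ on the distinguished edge $A_1B_1$ and the maximally entangled state $\phi^+_{d^k}$ on each of the other $n-2$ edges. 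I would have Alice play the bipartite game $G_{KV}$ with a common parameter $v=v(k)$ (a power of two, with $v(k)\to\infty$) against each $B_i$ using only the registers belonging to edge $i$; since $\widetilde G$ factorizes as $\prod_i G_{KV}$, the state factorizes across edges, and the players' measurements act edge-wise, the winning probability of $\tau(p)^{\otimes k}$ against $\widetilde G$ equals the product over the $n-1$ edges of the winning probabilities obtained on each edge.

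On each of the $n-2$ maximally entangled edges I would first extract a $v$-dimensional maximally entangled state from $\phi^+_{d^k}$ by a local filtering with no communication: Alice and $B_i$ each project onto a fixed $v$-dimensional coordinate subspace; the two success/failure outcomes are perfectly correlated, and taking $v=v(k)$ to be the largest power of two not exceeding $d^k$ makes success occur with probability $v(k)/d^k>1/2$. On success the players apply the optimal KV strategy $P_{\text{max}}$, so each such edge is won with probability at least $cD/\log^2 v(k)$ for an absolute constant $c$. For the distinguished edge I would invoke the quantitative bipartite superactivation of entangled isotropic states established in \cite{palazuelos_super-activation_2012,cavalcanti_all_2013}, in the strengthened form of \cite{amr_unbounded_2020}: since $\rho(p)$ is entangled, there are local measurements with no communication on $\rho(p)^{\otimes k}$ that win $G_{KV}$ with parameter $v(k)$ with some probability $\gamma_k$ whose associated violation $\gamma_k\,v(k)$ is unbounded as $k\to\infty$, in fact growing faster than every fixed power of $\log v(k)$. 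Multiplying the edge contributions, $\tau(p)^{\otimes k}$ wins $\widetilde G$ with probability at least $\gamma_k\,\bigl(cD/\log^2 v(k)\bigr)^{n-2}$.

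Comparing this with the bilocal bound $C/v(k)$, it then suffices that $\gamma_k\,v(k)/\log^{2(n-2)}v(k)$ eventually exceeds the constant $C/(cD)^{n-2}$, and since $n$ is fixed while $\gamma_k\,v(k)$ grows faster than every power of $\log v(k)$, this indeed holds once $k$ is large enough. For such $k$ the distribution produced by local measurements on $\tau(p)^{\otimes k}$ is not bilocal, i.e.\ $\tau(p)^{\otimes k}$ is GMNL, which is precisely item~(iii) of Theorem~\ref{thm superactiv}. The main obstacle is the distinguished edge: the quantitative bipartite superactivation must be run \emph{with the same parameter $v(k)$ used on the maximally entangled edges}, and the resulting violation must still outpace the $\log^{2(n-2)}v(k)$ loss coming from the other $n-2$ factors. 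This amounts to controlling the overhead in matching the isotropic-edge construction to a game parameter $v(k)$ of the order of $d^k$ (absorbing the rounding to a power of two and the filtering losses), which is the content of the quantitative estimates imported from \cite{palazuelos_super-activation_2012,cavalcanti_all_2013,amr_unbounded_2020}.
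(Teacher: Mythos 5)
Your proposal follows essentially the same route as the paper: both play the bipartite Khot--Vishnoi game independently on each edge of the star, bound the bilocal value by $C/v$ via Lemma \ref{lem:KV-bl}, and beat it by combining the quantum KV bound on the maximally entangled edges with the bipartite superactivation bound $\gamma_k v(k)\to\infty$ on the isotropic edge (which the paper re-derives explicitly from the entanglement-fraction decomposition $\rho(p)^{\otimes k}=F^k\phi^{+\otimes k}_{d}+\dots$ with $Fd>1$, rather than importing it as a black box). The only cosmetic difference is that you handle the power-of-two requirement by locally filtering $\phi^+_{d^k}$ down to dimension $v(k)$, whereas the paper first assumes $d$ is a power of two and defers the general case to a remark of \cite{palazuelos_largest_2014}; both work.
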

\begin{proof}
Let us fix any $p>1/(d+1)$ such that the isotropic state $\rho_{A_1B_1}(p)$ shared by $A_1$ and $B_1$ is entangled and simply write $\rho_{A_1B_1}\equiv \rho_{A_1B_1}(p)$ and $\tau\equiv \tau(p)$ for the rest of the proof. Taking $k$ copies of the star network in the statement of the lemma
is equivalent to taking a star network where Alice shares $k$ copies
of an isotropic state, $\rho_{A_1B_1}^{\otimes k}$, with Bob$_{1}$, and
$k$ copies of a $d$-dimensional maximally entangled state, $\phi_{A_iB_i}^{+\otimes k}$,
with each Bob$_{i}$, $i=2,...,n-1$. The latter state is in turn equivalent
to a $d^{k}$-dimensional maximally entangled state $\phi_{d^{k}}^{+}$.

Let us first assume that $d$ is a power of 2, therefore so is $d^k$, and consider the Khot-Visnoi game for $v=d^k$ and the corresponding generalization introduced in Lemma \ref{KV general}. We will use the superactivation result first proved in Ref. \cite{palazuelos_super-activation_2012} and extended in Ref. \cite{cavalcanti_all_2013} to show that state $\tau^{\otimes k}$ allows the parties to win the game $\widetilde{G}$ with a higher probability than if they use any bilocal strategy.

Given the structure of the game, the probability of winning $\widetilde{G}$ using the state $\tau^{\otimes k}$ is lower bounded by the product of the probabilities of winning each $G_{KV}$ with the state at each edge $i$, since the players can play every game independently. On the maximally entangled edges, the probability of winning $G_{KV}$ is bounded by equation (\ref{eq:KV-quantum-bipart}). We will obtain a similar bound for the isotropic edge. Let $\rho_{p}$ be a $d$-dimensional isotropic state with a visibility $p$ such that the state is entangled. Its entanglement fraction \cite{isotropic} is $F=\bra{\phi_{d}^{+}}\rho_{p}\ket{\phi_{d}^{+}}=p+(1-p)/d^{2}$,
which we can use to write the isotropic state in the form
\begin{equation}
\rho_{F}=F\phi_{d}^{+}+(1-F)\frac{\uno-\phi_{d}^{+}}{d^{2}-1},
\end{equation}with $F>1/d$ whenever $\rho_{F}$ is entangled. Hence, in our situation we can write
\begin{equation}
\rho_{A_1B_1}^{\otimes k}=F^{k}\phi_{d}^{+\otimes k}+\dots,\label{eq:isocopies}
\end{equation}
where $F>1/d$ and where the omitted terms are tensor products of $\phi_{d}^{+}$ and
$(\uno-\phi_{d}^{+})/(d^{2}-1)$ with coefficients that are products
of $F$ and $(1-F)$. Acting on $\rho_{A_1B_1}^{\otimes k}$ with the same
projective measurements as above gives a probability distribution
$P_{\text{iso}}$ which is linear in the terms of equation (\ref{eq:isocopies}),
and the action of $G_{KV}$ on this distribution is linear too. Since
the coefficients $G_{a_{i}b_{i}|x_{i}y_{i}}$ are nonnegative, we
have
\begin{equation}
\left\langle G_{KV},P_{\text{iso}}\right\rangle \geq F^{k}\left\langle G_{KV},P_{1}\right\rangle ,\label{eq:KV-isobound}
\end{equation}
where $P_{1}$ is the probability distribution obtained from the projective
measurements acting on $\phi_{d^{k}}^{+}$. By equation (\ref{eq:KV-quantum-bipart}),
we find
\begin{equation}
F^{k}\left\langle G_{KV},P_{1}\right\rangle \geq F^{k}(1-2\eta)^{2}.
\end{equation}
Since we have fixed $\eta=1/2-1/\log v$ like in Lemma \ref{lem:KV-bl}, we find
\begin{equation}
F^{k}\left\langle G_{KV},P_{1}\right\rangle \geq F^{L}\frac{D}{\ln^{2}v}=F^{k}\frac{D}{k^{2}\ln^{2}d},
\end{equation}
where $D$ is a universal constant.

Putting the bounds from both types of edges together, and denoting
by $P_{Q}$ the probability distribution obtained from the state of
the whole network and the projective measurements that are performed
on each edge, we obtain
\begin{equation}
\left\langle \widetilde{G},P_{Q}\right\rangle \ge F^{k}\frac{D}{k^{2}\ln^{2}d}\left(\frac{D}{k^{2}\ln^{2}d}\right)^{n-2}=\frac{F^{k}D^{n-1}}{k^{2(n-1)}\ln^{2(n-1)}d}.\label{eq:KV-quantum}
\end{equation}

Finally, we use Lemma \ref{lem:KV-bl} to compare the local and quantum
bounds. Using equations (\ref{eq:KV-BLbound}) and (\ref{eq:KV-quantum}),
we find
\begin{equation}
\frac{\left\langle \widetilde{G},P_{Q}\right\rangle }{\sup_{P_{BL}\in\mathcal{BL}}\left\langle \widetilde{G},P_{\text{BL}}\right\rangle }\geq\frac{D^{n-1}}{Ck^{2(n-1)}\ln^{2(n-1)}d}F^{k}d^{k},
\end{equation}
where $\mathcal{BL}$ is the set of bilocal distributions. Since $F>1/d$, this expression tends to $\infty$
as $k$ grows unbounded. In particular, the ratio is $>1$, proving
that GMNL is obtained.

The case where $d$ is not a power of 2 was commented in Remark 1.1 of Ref. \cite{palazuelos_largest_2014}. Here, one can modify the Khot-Visnoi game to obtaining a similar bound on the quantum winning probability but with a different constant $D$. The bound on the classical winning probability (equation (\ref{eq:KV-BLbound})) is unchanged. Hence, the same proof works in this case.
\end{proof}

\end{widetext}

\end{document}